\documentclass[11pt]{article}

\usepackage{algorithmic}
\usepackage{algorithm}
\usepackage{amssymb}
\usepackage{amsthm}
\usepackage{color}
\usepackage[text={6.5in,9.0in},centering]{geometry}
\usepackage{graphicx}
\usepackage{subfig}
\usepackage{hyperref}
\usepackage{makeidx}  
\usepackage{latexsym}
\usepackage{amsmath}
\usepackage{cases}
\usepackage{amsfonts}
\usepackage{epsfig}
\usepackage{boxedminipage}
\usepackage{url}

\sloppy


\newtheorem{result}{Result}

\newtheorem{theorem}{Theorem}
\newtheorem{lemma}{Lemma}

\newtheorem{definition}{Definition}

\newcommand{\no}{\noindent}
\newcommand{\bea}{\begin{eqnarray}}
\newcommand{\eea}{\end{eqnarray}}
\newcommand{\beq}{\begin{equation}}
\newcommand{\eeq}{\end{equation}}
\newcommand{\beqs}{\begin{equation*}}
\newcommand{\eeqs}{\end{equation*}}
\newcommand{\beas}{\begin{eqnarray*}}
\newcommand{\eeas}{\end{eqnarray*}}
\newcommand{\ep}{\end{proof}}
\newcommand{\bp}{\begin{proof}}


\def\cS{\mathcal{S}}
%


\def\wg{\widehat{g}}
\def\wj{\hat{j}}
\def\wt{\hat{t}}
\def\wi{\hat{i}}

\def\olta{\bar{L}_{j,k}^{t_a}}

\def\fjkita{f_{j,k}^{i,t_a}}
\def\gjkita{g_{j,k}^{i,t_a}}

\def\fjkiqta{f_{j,k}^{i,t_a}}
\def\gjkiqta{g_{j,k}^{i,t_a}}

\def\fjkit1{f_{j,k}^{i,t_1}}
\def\fjkit1{f_{j,k}^{i,t_1}}
\def\fjktt1{f_{j,k}^{t,t_1}}
\def\gjktt1{g_{j,k}^{t,t_1}}
\def\gjktt1{g_{j,k}^{t,t_1}}

\def\fjkxut1star{f_{j^*,k^*}^{x,u,t^*_1}}

\def\fjkita{f_{j,k}^{i,t_a}}
\def\gjkita{g_{j,k}^{i,t_a}}
\def\ljkita{\lambda_{j,k}^{i,t_a}}
\def\hljkita{\widehat{\lambda}_{j,k}^{i,t_a}}
\def\hWjkita{\widehat{W}_{j,k}^{i,t_a}}

\def\Pjktt1{P_{j,k}^{t,t_1}}

\def\PBS0{\frac{\widehat{h}_b P_T}{M} }

\def\PBS{P^{b^*}  }

\def\-{  \!-\! }
\def\+{  \!+\! }

\newcommand{\remove}[1]{}

\title{The Price of Anarchy is Unbounded for Congestion Games with Superpolynomial Latency Costs}


\author{ Rajgopal Kannan$^\dagger$,  Costas Busch$^\dagger$ and 
Paul Spirakis$^\ddagger$ \bigskip \\ 
$^\dagger$School of EECS, Louisiana State University, Baton Rouge, LA 70803\\ \{rkannan,busch\}@csc.lsu.edu \\
$^\ddagger$Dept. of Computer Engineering and Informatics, Univ. of Patras, Greece\\
spirakis@cti.gr
}


\begin{document}
\begin{titlepage}

\date{}
\maketitle

\thispagestyle{empty}

\begin{abstract}
We consider non-cooperative unsplittable congestion games where players share resources, and each player's strategy is pure and consists of a subset of the resources on 
which it applies a fixed weight.  Such games represent unsplittable routing flow games and also job allocation games. 
The congestion of a resource is the sum of the 
weights of the players that use it and the player's cost function is the sum 
of the utilities of the resources on its strategy.
The social cost is the total weighted sum of the player's costs.
The quality of Nash equilibria is determined by the price of anarchy ($PoA$) 
which expresses how much worse is the social outcome 
in the worst equilibrium versus the optimal coordinated solution.
In the literature the predominant work has only been on games with
polynomial utility costs, where it has been proven that the price of anarchy is bounded 
by the degree of the polynomial.
However, no results exist on general bounds for non-polynomial utility functions.

Here, we consider general versions of these  games in which the utility of each resource is an 
arbitrary non-decreasing function of the congestion. In particular, we consider a large family of 
superpolynomial utility functions which are asymptotically larger than any polynomial. We
demonstrate that for every such function there exist games for which the price of anarchy is unbounded
and increasing with the number of players (even if they have infinitesimal weights) while network resources 
remain fixed. We give tight lower and upper bounds which show this dependence on the number of players. 
Furthermore we provide an exact characterization of the $PoA$ of all congestion games whose
utility costs are bounded above by a polynomial function. Heretofore such results existed only for games with polynomial cost functions.

\end{abstract}
\end{titlepage}

\section{Introduction}
We consider non-cooperative congestion games on a set of resources which are shared among players.
A player's strategy consists of a subset (or all) of the resources
where it applies a fixed weight.
Strategies of players are pure in the sense that each player picks one strategy among a set of available strategies.
The resource utilization is unsplittable within a strategy, since a player applies the same weight on each resource.
The congestion of a resource is simply the sum of the weights of the players 
that use it.
The utility of each resource is a function of its congestion. 
Each player selfishly minimizes its own cost which is the sum of the utilities 
of all the resources along its strategy.

We examine pure Nash equilibria which are game states 
where each player has chosen a locally optimal strategy 
from which there is no better alternative strategy.
Rosenthal \cite{rosenthal1} shows that if all players have the same weight
then pure Nash equilibria always exist.
There may be multiple Nash equilibria for the same game.
The quality of a Nash equilibrium is measured with respect to the social cost function
which is simply the weighted sum of the all the players' costs. 
We measure the impact of the selfishness of the players with the {\em price of anarchy}
which is the ratio of the social cost of the worst Nash equilibrium versus the coordinated optimal social cost.

Resource congestion games can represent network flow games and job distribution games.
In networks, each resource corresponds to a link. A player with weight $w$ represents a routing request from a source node to a destination node of demand $w$ which is fulfilled along a path of the network. The utility cost of the player relates to the delay for sending the demand in the network along the chosen route. In job distribution games, each resource represents a machine. Each player has a job that consists of small sub-tasks that can execute at each node. The weight of the player $w$ relates to the work to be assigned to each machine in order to execute the job. The cost of the player relates to the delay to finish its job. In both the network and job games,
the price of anarchy represents the impact of selfishness to the overall performance of the system,
which in one case is the total network delay, and in the other the total work to execute all jobs.

Congestion games were introduced and
studied in~\cite{monderer1,rosenthal1}.
Most of the literature considers linear or polynomial utility functions.
Koutsoupias and Papadimitriou \cite{KP99} introduced
the notion of price of anarchy
in the specific {\em parallel link networks} model
in which they provide the bound $PoA = 3/2$.
Roughgarden and Tardos \cite{roughgarden3}
provided the first result for splittable flows in general networks
in which they showed that $PoA\le 4/3$.
Pure equilibria with atomic flow have been studied in
\cite{AAE05,BM06,SAGT,CK05,libman1,STZ04} for classic congestion games and their variations of botteleck games (where the cost is determined by the maximum congested edge),
and with splittable flow in
\cite{roughgarden1,roughgarden2,roughgarden3,roughgarden5}.
Mixed equilibria with atomic flow have been studied in
\cite{czumaj1,GLMMb04,KMS02,KP99,LMMR04,MS01,P01},
and with splittable flow in
\cite{correa1,FKS02}.

\section{Contributions: Functional Characterization of PoA Boundedness}

Let $\mathcal{L} = \{l_1(x), l_2(x), \ldots \}$ denote a class
of arbitrary non-decreasing latency cost functions. 

\begin{definition}
Function $l_k(x) \in \mathcal{L}$ is defined to be a superpolynomial function if it cannot be bounded from above by some polynomial function $x^p$ i.e 
\(
\not \exists p: 
\lim_{x \to \infty} \frac{l_k(x)}{x^p} \rightarrow 0
\)
\label{superpolynomial}
\end{definition}

We also define our notion of boundedness for the Price of Anarchy.

\begin{definition}
The Price of Anarchy of a congestion game $G$ is bounded if it does not arbitrarily increase with the number of players.
\label{boundedness}
\end{definition}

Under our notion of boundedness, the Price of Anarchy depends only on intrinsic game parameters such as
the parameters of the latency cost function, player strategies etc. but is {\it independent} 
of the number of players. Consider for example a game with 2 resources and $n$ players who can use either resource.
The $PoA$ is 1 and independent of $n$. In this paper, we prove the existence of a large class of latency 
functions for which there exist games in which  the $PoA$ increases with the number of players while other network 
parameters such as network topology or number of resources, player weights and cost functions remain fixed. 

Let $G$ be an unsplittable congestion game with player weights derived from weight set $W \subseteq \mathbb{R}^+$ and latency cost 
functions derived from $\mathcal{L}$. We assume that $W$ is bounded by $w = (\max_i \in W)$ representing the weight of the largest 
player. For any function $l_k() \in \mathcal{L}$, define the set of ordered triples 

\beq
O_k = \left \{(x,y,z)_k \in \mathcal{R}^3 |  x \geqslant y \geqslant z, 0 < z \leqslant w, \frac{l_k(x+z)}{l_k(x)} = \frac{x}{y} \right \}
\label{orderedtriples}
\eeq

As per the usual convention, for any ordered triple $(j,t,i) \in \mathcal{R}^3$, we denote $(j,t,i) \geqslant (x,t,i)_k$ if $j \geqslant x$. 
Next, for each $l_k()$, we define two special parameters:



\bea
g^*_k &= &\max\limits_{(x,y,z)_k \in O_k} \left \{ \frac{l_k(x+z)}{l_k(y)} \right \} 
\label{g*k}
\\
\wg_k &= &\max\limits_{x \in \mathbb{R}^+, y \geqslant z}
\frac{l_k(x+z)}{l_k(y)} - \frac{x l_k(x)}{y l_k(y)}
\label{ghatk}
\eea

\no Let the ordered triple values at $g^*_k$ and $\wg$ be denoted
by $(j^*,t^*,i^*)_k$ and $(\wj, \wt, \wi)_k$, respectively,
i.e $g^*_k = l_k(j^* + i^*)/l_k(t^*)$ and  $\wg_k =
\frac{l_k(\wj+\wi)}{l_k(\wt)} - \frac{\wj l_k(\wj)}{\wt l_k(\wt)}$.
Note that  $(\wj,\wt,\wi) \leqslant (j^*,t^*,i^*)$ since both $l_k(x+z)$
and $x l_k(x)$ are increasing in $j$ and hence by definition of the
ordered triple $(x,y,z)_k$ we must have $\widehat{g}_k \leqslant g^*_k$.

We consider three disjoint subclasses of latency functions from $\mathcal{L}$ as described below and 
evaluate their $PoA$ bounds.


\begin{itemize}

\item 
$\mathcal{L}_1 = \{ l_k \in \mathcal{L} | \forall i \in \mathbb{R}^+
\lim_{x \to \infty}  \frac{l_k(x+i)}{l_k(x)} > 1 \}$. Thus $\mathcal{L}_1$ contains superpolynomial functions such as $l_k(x) = a_k 2^x$, $l_k(x) = a_k x^x$ etc. with coefficient $a_k>0$.

\item 
$\mathcal{L}_2 = \{ l_k \in \mathcal{L} | l_k(x) = a_k \cdot e^{\log^{1+\epsilon}x} \}$, where $\epsilon >0$ is any constant, coefficient $a_k >0$ and $\log$ refers to the natural logarithm. The functions in $\mathcal{L}_2$ are superpolynomials with the property $\lim_{x \to \infty}  \frac{l_k(x+i)}{l_k(x)} = 1$. Thus $\mathcal{L}_2$ contains slower growing superpolynomial functions such as $l_k(x) = a_k x^{\log x}$, $l_k(x) = a_k x^{\log^2 x}$ etc.

\item 
$\mathcal{L}_3$ is the class of non-superpolynomial increasing functions inclusive of and bounded from above by polynomials, for example, $l_k(x) = (\sum_{q=0}^d a_q x^q)(\sum_{q=0}^d b_q \log^q x)$.

\end{itemize}

Our first result  is an exact bound on the Price of Anarchy of unsplittable congestion games.

\begin{result}
For every  subset of cost functions $\mathcal{L}' \subseteq \in \mathcal{L}$, there exist games $G$
where the Price of Anarchy is 

\beq
\displaystyle
PoA(G) = 
\max\limits_{\substack{ l_k \in \mathcal{L}', \forall (x,t,i)_k \in O_k \\ (j,t,i) \geqslant (x,t,i)_k , 0 <i \leqslant w} }
\ \ \frac{\widehat{g}_k j l_k(j)}{\widehat{g}_k t l_k(t) + j l_k(j) - t l_k(j+i)}
\eeq

\label{PoAbound:main}
\end{result}


As a corollary from above note that every game $G$ has a $PoA$ lower
bounded by the maximum value of $g_k$ at an ordered triple, i.e  $PoA =
\Omega(g^*_k)$ since the expression above equates to $g^*_k$ when $j =x$ 
for any $(x,t,i)_k \in O_k$.

Theorem~\ref{PoAbound:main} compactly describes a necessary and sufficient condition for the boundedness of the Price of Anarchy of any game
with cost functions drawn from $\mathcal{L}$. To the best of our knowledge, this is the first exact formulation of Price of Anarchy bounds 
(both lower and upper bounds) for unsplittable congestion games with {\it arbitrary} latency cost functions. Previous results by Awerbuch 
{\it et al.} and Monien {\it et al.} \cite{awerbuchSICOMP,monienSICOMP} have provided a tight characterization  for games with polynomial latency cost 
functions. In \cite{roughgardenSTOC09}, Roughgarden provides generalized existence conditions for the Price of Anarchy of congestion games 
using smoothness characterizations.
We demonstrate in this
paper for every superpolynomial cost function in $\mathcal{L}_1$ and $\mathcal{L}_2$, the existence of games for which the $PoA$ is unbounded
and also provide a tight bound on the $PoA$ for all games polynomially bounded latency costs.


\begin{result} The relation between the Price of Anarchy, latency cost functions and the number of players is as follows:

\begin{enumerate}
\item For every superpolynomial cost function in $\mathcal{L}_1$ and $\mathcal{L}_2$, there exist congestion games where the $PoA$ is arbitrarily large 
and increases with the number of players even with bounded weights.
\item
For every polynomially bounded congestion game (with latency functions drawn from $\mathcal{L}_3$), the $PoA$ is independent of the number of players and 
bounded only by the parameters of the cost function.
\end{enumerate}


\label{PoA:iffcondition}
\end{result}

Result~\ref{PoA:iffcondition} directly relates the Price of Anarchy
of unsplittable congestion games to the growth rates of the latency
cost functions that control the player costs in these games. More
significantly, it has strongly negative implications for the Price of
Anarchy of many such games.  These implications were heretofore unknown,
as the only known results to date were on the Price of Anarchy of
congestion games with polynomial cost functions.

Our result implies that the Price of Anarchy is
finite only for those games with latency cost functions bounded by some
polynomial. Latency costs growing faster than polynomial functions have
strongly negative consequences for the Price of Anarchy of every game
with player costs controlled by these functions. For every cost function in this class, there are games 
with {\it unbounded} Price of Anarchy, even if the weights of all players are infinitesimally small.


{\it Remark}: The $PoA$ is bounded for all games with latency functions from $\mathcal{L}_3$ such as the well-known polynomial 
cost functions \cite{awerbuchSICOMP,monienSICOMP} of degree $d \geqslant 0$ described by  $l_k(x) = \sum_{q=0}^d a_q x^q$,. We show in this paper 
that the $PoA$ is bounded even for other polynomial bounded functions, for example, $l_k(x) = (\sum_{q=0}^d a_q x^q)(\sum_{q=0}^m b_q \log^q x)$.

The verdict on the existence of the Price of Anarchy for games with faster growing cost functions is strongly negative. 
This includes both slowly growing super-polynomial cost functions such as $l_k(x) =  a_k x^{\log^{\epsilon} x}$, $\epsilon >0$ 
as well as fast-growing exponential cost functions such as $l_k(x) = x!$ and $l_k(x) = a^x$, where $a > 1$.


%

\section{Game Formulation}
\label{section:definitions}

An {\em unsplittable congestion game} is a strategic game
$G = (\Pi, W, R, \cS, (l_r)_{r \in R})$
where:
\begin{itemize}
\item
$\Pi = \{\pi_1,\ldots, \pi_N\}$ is a non-empty and finite set of players. 

\item  
Player weight set $W \in \mathbb{R}^+$ 
where each player $\pi_i$ has an associated weight $w_{\pi_i} \in W$ 
(also denoted as $i$ later in the analysis),
and the maximum player weight is
$w =\max_{\pi_i \in \Pi} w_{\pi_i}$.

\item
$R = \{r_1,\ldots,r_\zeta\}$ is a non-empty and finite set of resources.

\item
Strategy profile $\cS = \cS_{\pi_1} \times \cS_{\pi_2} \times \cdots \times \cS_{\pi_N}$,
where $\cS_{\pi_i}$ is a {\em strategy set} for player $\pi_i$,
such that $\cS_{\pi_i} \subseteq 2^R$.
Each strategy $S_{\pi_i} \in \cS_{\pi_i}$ is {\em pure} in the sense that it is
a single element from $\cS_{\pi_i}$ 
(in contrast to a {\em mixed strategy} 
which is a probability distribution over the strategy set of the player).
A {\em game state}
is any $S \in \cS$.
We consider {\em finite games} which have finite $\cS$ (finite number of states).
\end{itemize}

In any game state $S$,
let $S_{\pi_i}$ denote the strategy of player $\pi_i$.
We define the following terms with respect to a state $S$:
{\it Congestion}: Each resource $r \in R$ has a congestion $C_r (S) = \sum_{\pi_i \in \Pi \wedge r \in S_{\pi_i}} w_{\pi_i}$,
which is the sum of the weights of the players that use it.
{\it Utility:} In any game state $S$, each resource $r \in R$ has a {\em utility cost} (also referred to as {\em latency cost}) $l_r(S)$.
{\it Player Cost:} In any game state $S$,
each player $\pi \in \Pi_G$ has a {\em player cost} $pc_{\pi_i}(S) = \sum_{r \in S_{\pi_i}} l_r(S)$.
{\it Social Cost:}
In any game state $S$,
the {\em social cost} is 
$SC(S)  = \sum_{r \in R} l_r(C_r(S)) \cdot C_r(S)$.
Note that the social cost is the weighted sum of the player's costs.

%
%
%
%
%
%

When the context is clear, we will drop the dependence on $S$.
For any state $S$,
we use the standard notation
$S = (S_{\pi_i},S_{-{\pi_i}})$
to emphasize the dependence on player $\pi_i$.
Player $\pi_i$
is \emph{locally optimal} (or {\em stable}) in state $S$ if
$pc_{\pi_i}(S) \leq pc_{\pi_i}(S'_{\pi_i},S_{-{\pi_i}})$ for
all strategies $S'_{\pi_i} \in \cS_{\pi_i}$.
A greedy move by a player $\pi_i$ is any change of its strategy
from $S'_{\pi_i}$ to $S_{\pi_i}$
which improves the player's cost,
that is, $pc_{\pi_i}(S_{\pi_i},S_{-{\pi_i}}) < pc_{\pi_i}(S'_{\pi_i},S_{-{\pi_i}})$.
A state $S$ is in a {\em Nash Equilibrium}
if every player is locally optimal,
namely, no greedy move is available for any player.
A Nash Equilibrium realizes the notion of
a stable selfish outcome.
In the games that we study there could exist multiple Nash Equilibria.

A state $S^*$ is called {\em optimal} if it has minimum attainable
social cost: for any other state $S$, $SC(S^*) \le SC(S)$.  
We quantify the quality of the states which are Nash
Equilibria with the \emph{price of anarchy} ($PoA$) (sometimes referred
to as the coordination ratio).  Let $\cal P$ denote the set of distinct
Nash Equilibria.  Then the price of anarchy of game $G$ is:
\begin{equation*}
PoA(G)
=\max\limits_{ S \in {\cal P}} \frac{SC(S)}{SC(S^*)}
\end{equation*}

\section{Preliminaries}

Let $S$ denote an arbitrary (not necessarily an equilibrium) state of
game $G$ with resource set $R$.  We group resources in $R$ together based
on congestion and cost parameters. Let $R_{j,k}^t \subseteq R$ denote an
equivalence class of resources such that for every $r \in R_{j,k}^t$, we
have $C_r(S) = j$, $C_r(S^*) =t$ and latency costs governed by function
$l_k(C_r(S)) \in \mathcal{L}$ i.e the cost of using $r \in R_{j,k}^t$ in
states $S$ and $S^*$ are given by $l_k(j)$ and $l_k(t)$, respectively.
For notational convenience, we label the set of resource equivalence
classes by $\mathcal{E} = \{ R_{j,k}^t \}$.

For any resource $r \in R$, let $\Pi_{r} = \{ \pi | r \in S_{\pi} \}$
and $\Pi^*_{r} = \{ \pi | r \in S^*_{\pi} \}$.  Let $\sigma_i \subseteq
\Pi$ denote the set of players in $\Pi$ with weight $i$, $ 0 < i \leqslant w$
and let $\alpha_{ir} = |\sigma_i \bigcap \Pi^*_r|$ denote the number of
players of weight $i$ utilizing resource $r$ in the optimal state $S^*$.
Thus $\sum_{i: \sigma_i \neq \phi} i \cdot \alpha_{ir} = t$ for all $r
\in R_{j,k}^t$.


For any $t > 0$, let $f(t)$ be the total number of combinations of players
of different weights which can satisfy the equation $\sum_{i: \sigma_i
\neq \phi} i \cdot \alpha_{ir} = t$, where $f(t)$ can be an exponential
function of $t$.  We denote a particular such player combination by
the index $t_a$, $1 \leqslant a \leqslant f(t)$ and let $R^{t_a}_{j,k} \subseteq
R^t_{j,k}$ denote the equivalence class of resources with identical
configurations of players in the optimal state as represented by $t_a$. If
$t=0$, we will sometimes use the notation $0_0$ to denote the empty
configuration.  We represent the optimal configuration in state $S^*$
on any resource $r \in R^{t_a}_{j,k}$, by the vector $\olta = <\{
\alpha_{j,k}^{i,t_a} \}>$, where $\alpha_{j,k}^{i,t_a} > 0$ denotes the
number of players of weight $i$ in configuration $t_a$ on resource $r$
in optimal state $S^*$.  Henceforth we use the notation $i \in \olta$
to denote the presence of a player of weight $i$ in configuration $t_a$,
i.e if $\alpha_{j,k}^{i,t_a} > 0$.

%

We derive our $PoA$ bound by obtaining a constrained maximization formulation of the $PoA$ using resource equivalence classes that can then be bounded.  In particular, consider the term $|R^{t_a}_{j,k}|
i \alpha_{j,k}^{i,t_a} l_k(t)$ representing resource equivalence class
$R_{j,k}^{t_a}$.  This term represents the net contribution of all players of a given weight $i$ occupying the subset of resources $R_{j,k}^{t_a}$  towards the optimal social cost.  More formally define the terms

\bea
\lambda_{j,k}^{i,t_a} &= 
&\frac{|R_{j,k}^{t_a}| \cdot i \cdot \alpha_{j,k}^{i,t_a} l_k(t)}{SC(S^*)},
\quad R_{j,k}^t \in \mathcal{E}, t > 0, 1 \leqslant a \leqslant f(t), i: \sigma_i \in \Pi,
\label{FML:c2} \\
\lambda_{j,k}^{i,0} &= &\frac{|R_{j,k}^0|}{SC(S^*)},
\quad  \quad R_{j,k}^0 \in \mathcal{E}, i: \sigma_i \in \Pi.
\label{FML:c3}
\eea

Each term of Eq.~\ref{FML:c2} represents the fractional net contribution
of players of weight $i$ occupying resources in $R_{j,k}^{t_a}$ with $t>0$
towards the optimal social cost $SC(S^*)$. However as shown in the lemma
below, these terms also represent exactly the contribution of these
players towards the total Price of Anarchy.  Also, Eq.~\ref{FML:c3}
is defined for arbitrary $i$ for consistency with Eq.~\ref{FML:c2},
however in actuality $i$ can be assumed 0 since there are no players of
any weight on resources in $R_{j,k}^0$ in the optimal state.

Denote the coordination ratio of any unsplittable congestion game $G$ as
$H(S) = SC(S)/SC(S^*)$ for arbitrary state $S$ and optimal state $S^*$.
The following lemma relates the coordination ratio to the coefficients
$\lambda_{j,k}^{i,t_a}$.

\begin{lemma}
Given any game state $S$, the coordination ratio $H(S)$ of an
unsplittable congestion game with latency cost functions derived from
class $\mathcal{L}$ can be expressed as

\begin{flalign}
H(S) \quad = 
&\quad \sum_{R_{j,k}^{t_a} \in \mathcal{E}} 
\sum_{i \in \olta} 
\lambda_{j,k}^{i,t_a} \cdot \frac{j}{t} \cdot \frac{l_k(j)}{l_k(t)}
+
\quad \sum_{R_{j,k}^0 \in \mathcal{E}} 
\lambda_{j,k}^{0,0} \cdot j \cdot l_k(j)
\label{H(S):obj} \\
&\mbox{where} \notag \\
&\quad 
\sum_{R_{j,k}^{t_a} \in \mathcal{E}} \sum_{i \in \olta}
\lambda_{j,k}^{i,t_a} \quad = \quad 1
\label{lambda:c2}  \\
&\quad 
\lambda_{j,k}^{0,0} \quad \geqslant  \quad 0
\label{lambda:prop3}  
\end{flalign}

\label{H(S):Formulation}
\end{lemma}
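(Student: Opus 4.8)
The plan is to work directly from the definitions of the two social costs and regroup the resource sums over the refined equivalence classes $\{R_{j,k}^{t_a}\}$ and $\{R_{j,k}^0\}$, which partition $R$ exhaustively and disjointly. Each resource $r \in R_{j,k}^{t_a}$ contributes exactly $j\, l_k(j)$ to $SC(S)$ and $t\, l_k(t)$ to $SC(S^*)$, while each $r \in R_{j,k}^0$ contributes $j\, l_k(j)$ to $SC(S)$ but $0 \cdot l_k(0)=0$ to $SC(S^*)$. Hence
\[
SC(S) = \sum_{R_{j,k}^{t_a}} |R_{j,k}^{t_a}|\, j\, l_k(j) + \sum_{R_{j,k}^0} |R_{j,k}^0|\, j\, l_k(j), \qquad SC(S^*) = \sum_{R_{j,k}^{t_a}} |R_{j,k}^{t_a}|\, t\, l_k(t),
\]
and $H(S)=SC(S)/SC(S^*)$ splits cleanly into a $t>0$ contribution and a $t=0$ contribution.

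The key step is the weight-balance identity $\sum_{i \in \bar{L}_{j,k}^{t_a}} i \cdot \alpha_{j,k}^{i,t_a} = t$, which holds because the optimal configuration on any $r \in R_{j,k}^{t_a}$ has total congestion $t$. Combined with the definition of $\lambda_{j,k}^{i,t_a}$, this yields the per-class identity $\sum_{i \in \bar{L}_{j,k}^{t_a}} \lambda_{j,k}^{i,t_a} = |R_{j,k}^{t_a}|\, t\, l_k(t)/SC(S^*)$. Multiplying through by the cost-ratio $\frac{j}{t}\cdot\frac{l_k(j)}{l_k(t)}$ cancels the factor $t\, l_k(t)$ against $j\, l_k(j)$ and reproduces precisely $|R_{j,k}^{t_a}|\, j\, l_k(j)/SC(S^*)$, the contribution of that class to the first fraction. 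For the $t=0$ classes, substituting $\lambda_{j,k}^{0,0}=|R_{j,k}^0|/SC(S^*)$ reproduces the second sum directly. Summing over all classes then gives the claimed expression for $H(S)$.

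Finally I would verify the side conditions. Summing the per-class identity $\sum_{i} \lambda_{j,k}^{i,t_a} = |R_{j,k}^{t_a}|\, t\, l_k(t)/SC(S^*)$ over all $t>0$ classes and comparing with the expression for $SC(S^*)$ above shows the total equals $1$, which is the normalization constraint; non-negativity of $\lambda_{j,k}^{0,0}$ is immediate since $|R_{j,k}^0|$ and $SC(S^*)$ are both non-negative. There is no deep obstacle, as the argument is an algebraic regrouping. The one point requiring care is the bookkeeping of the partition: one must check that the refined classes are exhaustive and disjoint and that the empty-configuration resources are excluded from $SC(S^*)$ (where they contribute $0\cdot l_k(0)$) while still counted in $SC(S)$. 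Getting this split exactly right is what makes the normalization evaluate to $1$ rather than to some smaller quantity.
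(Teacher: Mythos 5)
Your proof is correct and follows essentially the same route as the paper's: both hinge on the weight-balance identity $\sum_{i \in \olta} i \cdot \alpha_{j,k}^{i,t_a} = t$ together with direct substitution of the definitions of $\lambda_{j,k}^{i,t_a}$ and $\lambda_{j,k}^{0,0}$, and both verify the normalization constraint by the same summation. The only difference is cosmetic: you decompose $SC(S)/SC(S^*)$ into the claimed form, whereas the paper substitutes into the claimed form and collapses it back to $SC(S)/SC(S^*)$.
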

{\it Proof}: Please see Appendix.
%
%
%
%

Let $\mathcal{P}$ denote the set of Nash equilibrium states of $G$.
Then from the definition of the Price of Anarchy, we have
\beq
PoA(G) \quad = \quad \max_{S \in \mathcal{P}} H(S) 
\label{FML:obj} \\
\eeq

Note that for any group of resources $R_{j,k}^t$, the term
$\frac{jl_k(j)}{tl_k(t)}$ represents the localized $PoA$.  Thus given
constraint~\ref{lambda:c2} we can also view the overall $PoA$ of the
game as the average of the localized $PoA$'s on each resource class.
Also note that while the $\lambda_{j,k}^{i,t_a}$, $t > 0$, terms
representing actual optimal player configurations are constrained, the
$\lambda_{j,k}^{0,0}$ terms representing resources contributing to the
equilibrium cost but not the optimal are not.  However as shown later
they cannot be too large.

\section{ Price of Anarchy Lower Bounds}
\label{LB}

 For any arbitrary cost function in $\mathcal{L}$, we describe a specific game which bounds the Price
of Anarchy from below. Consider a game 
$G = (\Pi, W, R, \cS, l_k)$,
where $l_k \in \mathcal{L}$ and 
players $\Pi = \{ \pi_1, \ldots, \pi_{N}\}$
such that every player has a demand of weight exactly $w \in W$.
The set of resources $R$, where $\zeta = |R|$,
can be divided into two disjoint sets $R = A \cup B$,
$A \cap B = \emptyset$, such that
$A = \{a_0, \ldots, a_{\zeta_1 - 1} \}$
and $B = \{b_0, \ldots, b_{\zeta_2 - 1} \}$,
that is, $\zeta = \zeta_1 + \zeta_2$.

Consider parameters $\alpha, \beta, \gamma, \delta \geq 0$,
such that
$\zeta_1 \geq \alpha + \beta$,
and $\zeta_2 \geq \gamma + \delta$. 
Each player $\pi_i$ has two strategies $\cS_{\pi_i} = \{s_i, {\overline s}_i\}$.
Strategy $s_{i}$ occupies $\alpha$ consecutive resources from the set $A$,
$s^A_{i} = \{ a_{(i-1) \mod \zeta_1}, \ldots, a_{(i + \alpha - 1) \mod \zeta_1} \}$,
and $\beta$ consecutive resources from $B$,
$s^B_{i} = \{ b_{(i-1) \mod \zeta_2}, \ldots, b_{(i + \beta - 1) \mod \zeta_2} \}$;
namely, $s_{i} = s^A_{i} \cup s^B_{i}$.
Strategy ${\overline s}_{i}$ 
occupies $\gamma$ consecutive resources from the set $A$
such that the first resource is immediately after the last in $s^A_i$,
${\overline s}^A_{i} = \{ a_{(i + \alpha -1) \mod \zeta_1}, \ldots, a_{(i + \alpha + \gamma - 1) \mod \zeta_1} \}$,
and $\delta$ consecutive resources from $B$,
such that the first resource is immediately after the last in $s^B_i$,
${\overline s}^B_{i} = \{ b_{(i + \beta -1) \mod \zeta_2}, \ldots, b_{(i + \beta + \delta - 1) \mod \zeta_2} \}$;
namely, ${\overline s}_{i} = {\overline s}^A_{i} \cup {\overline s}^B_{i}$.

We consider the game state $S = (s_1,\ldots, s_N)$
which consists of the first strategy of each player,
and game state ${\overline S} = ({\overline s}_1,\ldots, {\overline s}_N)$
which consists of the second strategy of each player.
We take the number of players 
$N = \kappa_1 \zeta_1$ and $N = \kappa_2 \zeta_2$,
for integers $\kappa_1, \kappa_2 \geq 0$.

\begin{lemma}
\label{lemma:LB-equil}
State $S$ is a Nash equilibrium.
\end{lemma}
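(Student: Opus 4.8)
The plan is to use the fact that each player $\pi_i$ has exactly two available strategies, $s_i$ and $\overline{s}_i$. By the definition of local optimality, the state $S=(s_1,\dots,s_N)$ is a Nash equilibrium if and only if no single player can strictly lower its cost by switching from $s_i$ to $\overline{s}_i$ while everyone else stays fixed at their first strategy; that is, it suffices to prove $pc_{\pi_i}(s_i,S_{-\pi_i}) \le pc_{\pi_i}(\overline{s}_i,S_{-\pi_i})$ for every $i$. First I would exploit the rotational symmetry of the construction. Because the strategies are defined by cyclic shifts modulo $\zeta_1$ on $A$ and modulo $\zeta_2$ on $B$, and because $N=\kappa_1\zeta_1=\kappa_2\zeta_2$ is an exact multiple of both, every resource of $A$ carries the same congestion in state $S$, and likewise every resource of $B$. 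Counting how many of the $N$ first strategies cover a fixed resource then yields the uniform values $C_r(S)=\kappa_1\alpha w$ for each $r\in A$ and $C_r(S)=\kappa_2\beta w$ for each $r\in B$. This symmetry also collapses the $N$ separate incentive constraints into a single representative one.

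Next I would evaluate the two sides of the incentive inequality. On the equilibrium side, $\pi_i$ occupies $\alpha$ resources of $A$ and $\beta$ resources of $B$, so $pc_{\pi_i}(s_i,S_{-\pi_i}) = \alpha\, l_k(\kappa_1\alpha w) + \beta\, l_k(\kappa_2\beta w)$. For the deviation, the key observation is that the resources of $\overline{s}_i$ are, apart from their shared boundary with $s_i$, distinct from those of $s_i$, so $\pi_i$ contributes nothing to them in state $S$; when $\pi_i$ moves there, the remaining players keep their first strategies and still supply the uniform congestion $\kappa_1\alpha w$ (resp.\ $\kappa_2\beta w$), to which $\pi_i$ now adds its own weight $w$. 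Hence the deviating resources of $A$ carry congestion $\kappa_1\alpha w + w$ and those of $B$ carry $\kappa_2\beta w + w$, giving $pc_{\pi_i}(\overline{s}_i,S_{-\pi_i}) = \gamma\, l_k(\kappa_1\alpha w + w) + \delta\, l_k(\kappa_2\beta w + w)$. The lemma then reduces to the scalar inequality $\alpha\, l_k(\kappa_1\alpha w)+\beta\, l_k(\kappa_2\beta w) \le \gamma\, l_k(\kappa_1\alpha w + w)+\delta\, l_k(\kappa_2\beta w + w)$, which I would verify from the relations among $\alpha,\beta,\gamma,\delta,\kappa_1,\kappa_2$ fixed by the construction (these being exactly the parameters the construction ties to an ordered triple $(x,y,z)_k$ satisfying $l_k(x+z)/l_k(x)=x/y$, as used later in the $PoA$ computation).

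The main obstacle is the congestion bookkeeping rather than any deep inequality. I would need to justify uniformity of congestion rigorously from the modular index definitions, and in particular to treat correctly the boundary resource where $s_i^A$ and $\overline{s}_i^A$ meet (and similarly in $B$): on such a shared resource $\pi_i$ is already present, so its congestion after a deviation stays at $\kappa_1\alpha w$ rather than $\kappa_1\alpha w + w$, and the off-by-one in the resource counts must be tracked so that the load $\pi_i$ removes from the resources it vacates is not mistakenly charged against its deviation cost. Once the congestion levels on both strategies are pinned down exactly, the deviation cost is a sum of non-decreasing latency terms and the Nash condition follows directly from the reduced scalar inequality.
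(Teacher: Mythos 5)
Your setup reproduces the paper's almost exactly: the cyclic symmetry together with $N=\kappa_1\zeta_1=\kappa_2\zeta_2$ gives uniform congestions $\kappa_1\alpha w$ on $A$ and $\kappa_2\beta w$ on $B$, hence $pc_{\pi_i}(S)=\alpha\, l_k(\kappa_1\alpha w)+\beta\, l_k(\kappa_2\beta w)$ and, for the deviation, $pc_{\pi_i}(\overline{s}_i,S_{-\pi_i})=\gamma\, l_k(\kappa_1\alpha w+w)+\delta\, l_k(\kappa_2\beta w+w)$, and the lemma reduces to the scalar inequality $\alpha\, l_k(\kappa_1\alpha w)+\beta\, l_k(\kappa_2\beta w)\leqslant\gamma\, l_k(\kappa_1\alpha w+w)+\delta\, l_k(\kappa_2\beta w+w)$. (Your boundary-resource worry is moot: the paper's proof explicitly assumes $s_i\cap\overline{s}_i=\emptyset$, so the apparent overlap in the index sets is an off-by-one typo in the construction, not something the deviation accounting must absorb.)

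The genuine gap is your final step. You claim the scalar inequality can be ``verified from the relations among $\alpha,\beta,\gamma,\delta,\kappa_1,\kappa_2$ fixed by the construction,'' but the construction fixes no such relations --- the only stated constraints are $\zeta_1\geqslant\alpha+\beta$, $\zeta_2\geqslant\gamma+\delta$ and $N=\kappa_1\zeta_1=\kappa_2\zeta_2$ --- and the inequality is simply false for some admissible parameters (take $\gamma=\delta=0$ with $\alpha,\beta>0$). The actual content of the lemma is that the free parameters \emph{can be chosen} so that the inequality holds, and exhibiting that choice is where the paper's proof does its work: if $\alpha\, l_k(\kappa_1\alpha w)-\gamma\, l_k(\kappa_1\alpha w+w)\leqslant 0$ it sets $\delta=\beta$ and invokes monotonicity of $l_k$; otherwise it rescales $\beta=\beta'\zeta_2/\zeta_1$, $\delta=\delta'\zeta_2/\zeta_1$, takes $\zeta_2\geqslant\kappa_1\alpha\zeta_1$ and $\delta'-\beta'\geqslant(\alpha-\gamma)/(\kappa_1\alpha)$, and closes with a chain of monotonicity estimates. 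Your appeal to the ordered triple $(x,y,z)_k$ does not repair this: the relation $l_k(x+z)/l_k(x)=x/y$ enters only \emph{after} the lemma, when the paper tunes parameters to force $pc_{\pi_i}(S)=pc_{\pi_i}(S')$, and the feasibility of such a tuning is precisely what must be proved rather than assumed. As written, your argument establishes the reduction but leaves the inequality --- the only nontrivial part --- unproved.
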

{\it Proof}: Please see Appendix.

As observed in the proof of Lemma \ref{lemma:LB-equil},
in state $S$ each resource $r \in A$ has congestion equal to $j_1 = C_r(S) = N \alpha w / \zeta_1$,
and each resource $r \in B$ has congestion equal to $j_2 = C_r(S) = N \beta w / \zeta_2$.
Similarly, 
in state ${\overline S}$ each resource $r \in A$ has congestion equal to $t_1 = C_r(S) = N \gamma w / \zeta_1$,
and each resource $r \in B$ has congestion equal to $t_2 = C_r(S) = N \delta w / \zeta_2$.
Similar to the previous section,
we define parameters $\lambda_1, \lambda_2 \geq 0$, 
such that $\lambda_1 + \lambda_2 = 1$ and:
$$ \lambda_1 = \frac {\zeta_1 t_1 l_k(t_1)} 
                     {\zeta_1 t_1 l_k(t_1) + \zeta_2 t_2 l_k(t_2)},
\qquad \qquad
\lambda_2 = \frac {\zeta_2 t_2 l_k(t_2)}
                  {\zeta_1 t_1 l_k(t_1) + \zeta_2 t_2 l_k(t_2)}.
$$
From Lemma \ref{lemma:LB-equil}, we have that $S$ is a Nash equilibrium,
and thus, for any player $i$, $pc_{\pi_i}(S) \leq pc_{\pi_i}(S')$.
With an appropriate choice of the game parameters ($\alpha, \beta, \gamma, \delta, \zeta_1, \zeta_2, \kappa_1, \kappa_2$)
and also by adjusting the 
weight $w$ which is a real number,
we can actually get $pc_{\pi_i}(S) = pc_{\pi_i}(S')$
for each player $\pi_i \in \Pi$.
In other words,
$$
\alpha \cdot l_k(j_1) + \beta \cdot l_k(j_2)
=
\gamma \cdot l_k(j_1 + w) + \delta \cdot l_k(j_2 + w).
$$
Therefore,
$$ 
\zeta_1 j_1 \cdot l_k(j_1) + \zeta_2 j_2 \cdot l_k(j_2)
=
\zeta_1 t_1 \cdot l_k(j_1 + w) + \zeta_2 t_2 \cdot l_k(j_2 + w),
$$
and hence,
$$ 
\zeta_1 j_1 \cdot l_k(j_1) - \zeta_1 t_1 \cdot l_k(j_1 + w)
=
\zeta_2 t_2 \cdot l_k(j_2 + w) - \zeta_2 j_2 \cdot l_k(j_2),
$$
or equivalently,
$$
\lambda_1 
\cdot \frac {j_1 l_k(j_1) - t_1 l_k(j_1 + w)}
            {t_1 l_k(t_1)}
=
\lambda_2
\cdot \frac {t_2 l_k(j_2 + w) - j_2 l_k(j_2)}
            {t_2 l_k(t_2)},
$$
which gives,
\begin{equation}
\label{eqn:lambdas}
\lambda_1 
\cdot \left [ \frac {j_1 l_k(j_1)}
            {t_1 l_k(t_1)}
      - \frac {l_k(j_1 + w)}
              {l_k(t_1)} \right ]
=
\lambda_2
\cdot \left [ \frac {l_k(j_2 + w)}
                    {l_k(t_2)}
              - \frac {j_2 l_k(j_2)}
                      {t_2 l_k(t_2)} \right].
\end{equation}

\begin{lemma}
\label{POALB}
For game $G$, the price of anarchy is bounded by:
$$PoA(G) 
\geqslant \max \left( g^*_k,
\max_{\substack{(j_1,t_1,w) \geqslant (x,t_1,w)_k\\
\forall (x,t_1,w)_k \in O_k}} \ \ 
\frac {{\widehat g}_k j_1 l_k(j_1)}
        {{\widehat g}_k t_1 l_k(t_1) +  j_1 l_k(j_1) - t_1 l_k(j_1 + w)}
\right).$$
\end{lemma}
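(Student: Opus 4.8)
The plan is to exhibit, for a fixed ordered triple and congestion level, a concrete instance of the game constructed in this section and to bound its price of anarchy from below by comparing the equilibrium state $S$ against the (non-equilibrium) state $\overline S$. Since $S^*$ is optimal we have $SC(S^*)\leqslant SC(\overline S)$, and since $S$ is a Nash equilibrium by Lemma~\ref{lemma:LB-equil} it contributes to the maximum defining $PoA(G)$; hence
\[
PoA(G)\ \geqslant\ \frac{SC(S)}{SC(S^*)}\ \geqslant\ \frac{SC(S)}{SC(\overline S)}\ =:\ H .
\]
First I would evaluate the two social costs from the congestion values recorded after Lemma~\ref{lemma:LB-equil}, namely $SC(S)=\zeta_1 j_1 l_k(j_1)+\zeta_2 j_2 l_k(j_2)$ and $SC(\overline S)=\zeta_1 t_1 l_k(t_1)+\zeta_2 t_2 l_k(t_2)$. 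Dividing and substituting the definitions of $\lambda_1,\lambda_2$ rewrites $H$ as the convex combination of the two localized ratios,
\[
H=\lambda_1\,\frac{j_1 l_k(j_1)}{t_1 l_k(t_1)}+\lambda_2\,\frac{j_2 l_k(j_2)}{t_2 l_k(t_2)}\ =:\ \lambda_1 P_1+\lambda_2 P_2 .
\]

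The core of the argument is to use the tight equilibrium identity~\eqref{eqn:lambdas} to pin down $\lambda_1$. Writing $a=l_k(j_1+w)/l_k(t_1)$ and $b=l_k(j_2+w)/l_k(t_2)$, equation~\eqref{eqn:lambdas} reads $\lambda_1(P_1-a)=\lambda_2(b-P_2)$. The key observation is that $b-P_2=\frac{l_k(j_2+w)}{l_k(t_2)}-\frac{j_2 l_k(j_2)}{t_2 l_k(t_2)}$ is exactly an instance of the quantity maximized by $\widehat g_k$ in~\eqref{ghatk} with $(x,y,z)=(j_2,t_2,w)$. I would therefore choose the $B$-block by setting the weight $w=\widehat i$ and $(j_2,t_2)=(\widehat j,\widehat t)$, which makes $b-P_2=\widehat g_k$ exactly. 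Together with $\lambda_1+\lambda_2=1$ this yields $\lambda_1=\widehat g_k/(\widehat g_k+P_1-a)$. Since $\widehat g_k>0$, the identity forces $P_1-a>0$, i.e.\ $j_1 l_k(j_1)-t_1 l_k(j_1+w)>0$; this positivity (and hence positivity of the denominator of the claimed bound) is guaranteed precisely by the hypothesis $(j_1,t_1,w)\geqslant(x,t_1,w)_k$, the inequality degenerating to equality at $j_1=x$.

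Finally, since $\lambda_2\geqslant 0$ and $P_2\geqslant 0$, dropping the nonnegative term gives
\[
PoA(G)\ \geqslant\ H\ =\ \lambda_1 P_1+\lambda_2 P_2\ \geqslant\ \lambda_1 P_1\ =\ \frac{\widehat g_k\, j_1 l_k(j_1)}{\widehat g_k\, t_1 l_k(t_1)+j_1 l_k(j_1)-t_1 l_k(j_1+w)},
\]
where the last equality is obtained by multiplying $\lambda_1 P_1=\widehat g_k P_1/(\widehat g_k+P_1-a)$ through by $t_1 l_k(t_1)$. Maximizing over all admissible triples $(x,t_1,w)_k\in O_k$ and all $j_1\geqslant x$ produces the second term of the stated maximum. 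For the first term I would specialize to $j_1=x$ (equivalently take the $B$-block empty, $\zeta_2=0$, so $\lambda_1=1$): the defining relation $l_k(x+w)/l_k(x)=x/t_1$ of $O_k$ then makes the bracket vanish, the bound collapses to $x l_k(x)/(t_1 l_k(t_1))=l_k(x+w)/l_k(t_1)$, and choosing $w=i^*$ with the optimizing triple $(j^*,t^*,i^*)$ delivers $g^*_k$.

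I expect the main obstacle to be \emph{realizability}: one must verify that the real-valued weight $w$ together with the integer parameters $\alpha,\beta,\gamma,\delta,\zeta_1,\zeta_2,\kappa_1,\kappa_2$ can be chosen so that the four congestions $j_1,t_1,j_2,t_2$ simultaneously match (or approach) the values dictated by the optimizing triples of $g^*_k$ and $\widehat g_k$ while keeping $S$ a tight Nash equilibrium. This is exactly where the freedom in the real weight $w$, and the reduction preceding~\eqref{eqn:lambdas} allowing $pc_{\pi_i}(S)=pc_{\pi_i}(S')$, become essential; some care is needed since the optimizing arguments of $g^*_k$ and $\widehat g_k$ need not be rational, so the target congestion ratios may have to be attained in a limiting sense.
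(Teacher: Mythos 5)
Your proposal is correct and follows essentially the same route as the paper's proof: lower-bound $PoA(G)$ by $SC(S)/SC(\overline S)$, write it as the convex combination $\lambda_1 P_1 + \lambda_2 P_2$, use the tightness identity (Eq.~\ref{eqn:lambdas}) with the $B$-block tuned to realize $\widehat g_k$ so that $\lambda_1 = \widehat g_k/(\widehat g_k + P_1 - a)$, drop $\lambda_2 P_2 \geqslant 0$, and recover $g^*_k$ from the degenerate case $\lambda_1 = 1$. Your additions (the explicit $SC(S^*) \leqslant SC(\overline S)$ step, the positivity of the denominator, and the realizability caveat about matching the optimizing triples with actual game parameters) are points the paper leaves implicit, not a different argument.
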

{\it Proof}: Please see Appendix.

\section{ Price of Anarchy Upper Bounds}
\subsection{Constrained Maximization}

Let $S$ be any Nash equilibrium state of $G$.  We find the upper bound on the $PoA$ via the lemma below in which we convert the unconstrained maximization of Eq.~\ref{H(S):obj} into a constrained version. Consider an arbitrary resource equivalence class $R_{j,k}^{t_a}$ in state $S$ of $G$. For any player of weight $i \in \olta$, $0 < i \leqslant \min(t,w)$, define

\bea
f_{j,k}^{i,t_a} &=
&
\frac{j l_k(j)}{t l_k(t)}
-
\frac{l_k(j+i)}{l_k(t)},
\label{fjk:def} \\
f_{j,k}^{0,0} &= &  j l_k(j) 
\label{fjk0:def} 
\eea 

\no Also let $\gjkita = - \fjkita$. We first define the notion of underloaded and overloaded resource sets conditioned on the value of $\fjkita$. 


\begin{definition}
Resource subset $R_{j,k}^{t_a}$ is defined to be overloaded with respect to players of weight $i$ if $\fjkita \geqslant 0$ and underloaded if $\fjkita < 0$ ($\gjkita > 0$). Define $F = \{ f_{j,k}^{i,t_a} : \fjkita \geqslant 0 \}$  and $T = \{ g_{j,k}^{i,t_a} :  \gjkita > 0 \}$. 
\label{overloaded}
\end{definition}

It can be seen that the first term in the definition of $\fjkita$ above is related to the overall social cost of players using resource class $R_{j,k}^{t_a}$ while the second term is related to the cost of a player of weight $i$ switching to a resource in $R_{j,k}^{t_a}$ from its current strategy. Thus the magnitude of the function $\fjkita$ is an indicator of the contribution of the corresponding resource class $R_{j,k}^{t}$ to the overall Price of Anarchy and also indicates the excess load over the switching costs in that resource class. However for any equilibrium state $S$, the switching costs must exceed the players costs when taken over all resource classes and thus the weight of the overloaded resource classes must be constrained by the underloaded resource classes, as we show through the lemma below.



\begin{lemma}
Let $S \in \mathcal{P}$ be any Nash equilibrium state of game $G$.
Then we must have,
\beq
\sum_{R_{j,k}^0 \in \mathcal{E}} \lambda_{j,k}^{0,0} f_{j,k}^{0,0}
+
\sum_{\substack{R_{j,k}^{t_a} \in \mathcal{E} \\ \sum_{i \in \olta}}} 
\sum_{\fjkita \in F}
\lambda_{j,k}^{i,t_a} f_{j,k}^{i,t_a}
\quad \leqslant \quad 
\sum_{\substack{R_{j,k}^{t_a} \in \mathcal{E} \\ \sum_{i \in \olta}}} 
\sum_{\gjkita \in T} 
\lambda_{j,k}^{i,t_a} g_{j,k}^{i,t_a}
\label{ML:1} \\
\eeq
\label{lemma:mainconstraint}
\end{lemma}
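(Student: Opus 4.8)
The plan is to derive the inequality directly from the Nash equilibrium condition by summing the per-player stability constraints, weighted by player weight, and then regrouping the resulting sums according to the resource equivalence classes $R_{j,k}^{t_a}$. The engine of the proof is a single observation about deviations: for each player $\pi$ of weight $i=w_\pi$, local optimality of $S$ gives $pc_{\pi}(S) \leq pc_{\pi}(S^*_\pi, S_{-\pi})$, comparing the equilibrium strategy against the player's own optimal strategy $S^*_\pi$. Since $l_k$ is non-decreasing, moving $\pi$ onto any resource $r \in S^*_\pi$ raises its congestion to at most $C_r(S)+i$, whether or not $\pi$ already occupied $r$ in $S$, so $pc_{\pi}(S^*_\pi, S_{-\pi}) \leq \sum_{r \in S^*_\pi} l_k(C_r(S)+i)$. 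This monotonicity bound is exactly what lets me avoid tracking which resources $\pi$ shares between $S$ and $S^*$.

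First I would multiply each such inequality by the player weight $i$ and sum over all players. The left side collapses to $SC(S) = \sum_{\pi} w_\pi\, pc_{\pi}(S)$. For the right side I regroup by resource: each resource $r$ is charged $\sum_{\pi \in \Pi^*_r} w_\pi\, l_k(C_r(S)+w_\pi)$, and inside a class $R_{j,k}^{t_a}$ (where every resource has $C_r(S)=j$ and an identical optimal configuration) this equals $\sum_{i \in \olta} i\,\alpha_{j,k}^{i,t_a}\, l_k(j+i)$ per resource, hence $|R_{j,k}^{t_a}|\sum_{i \in \olta} i\,\alpha_{j,k}^{i,t_a}\, l_k(j+i)$ for the class. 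Classes with empty optimal configuration ($t=0$) carry no players in $S^*$ and contribute nothing on this side.

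The key bookkeeping step is to rewrite $SC(S)$ in the same coordinates. Writing $SC(S) = \sum_{R_{j,k}^{t_a}} |R_{j,k}^{t_a}|\, j\, l_k(j)$ and using the identity $t = \sum_{i \in \olta} i\,\alpha_{j,k}^{i,t_a}$, I distribute each equilibrium cost term for $t>0$ as $|R_{j,k}^{t_a}|\, j\, l_k(j) = \sum_{i \in \olta} |R_{j,k}^{t_a}|\, i\,\alpha_{j,k}^{i,t_a}\, \frac{j\,l_k(j)}{t}$, keeping the $t=0$ classes as $|R_{j,k}^0|\, j\, l_k(j)$. Substituting both sides of the summed Nash inequality and collecting everything on one side gives
\[
\sum_{R_{j,k}^0} |R_{j,k}^0|\, j\, l_k(j) + \sum_{t>0}\sum_{i \in \olta} |R_{j,k}^{t_a}|\, i\,\alpha_{j,k}^{i,t_a}\Bigl(\frac{j\,l_k(j)}{t} - l_k(j+i)\Bigr) \leq 0.
\]
Dividing through by $SC(S^*)$ and recognizing the definitions of $\ljkita$ and $\fjkita$ turns this into $\sum_{R_{j,k}^0}\lambda_{j,k}^{0,0} f_{j,k}^{0,0} + \sum_{t>0}\sum_{i}\ljkita\,\fjkita \leq 0$. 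Splitting the $t>0$ sum at the sign of $\fjkita$ — overloaded terms ($\fjkita \geq 0$) lying in $F$, and underloaded terms contributing $\fjkita = -\gjkita$ with $\gjkita \in T$ — and moving the underloaded part to the right yields precisely the stated inequality.

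The step I expect to be the main obstacle is the charging/regrouping argument: verifying that the distribution $|R_{j,k}^{t_a}|\, j\, l_k(j) = \sum_{i \in \olta} |R_{j,k}^{t_a}|\, i\,\alpha_{j,k}^{i,t_a}\, \frac{j\,l_k(j)}{t}$ correctly spreads each equilibrium cost across the optimal weight classes, and that the classes $\{R_{j,k}^{t_a}\}$ partition $R$ consistently on both sides so that no resource is double-counted or dropped — in particular handling the $t=0$ classes separately, since they appear on the equilibrium side but never on the deviation side. Once this accounting is pinned down, the result follows immediately from the monotonicity deviation bound together with the split into $F$ and $T$.
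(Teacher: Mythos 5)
Your proof is correct and is essentially the paper's own argument run in the opposite direction: the paper starts from the stated inequality, substitutes the definitions of $\ljkita$ and $\fjkita$, and regroups the resulting resource sums into the weighted Nash differences $\sum_{\pi} w_\pi\,\bigl(pc_\pi(S) - pc_\pi(S^*_\pi,S_{-\pi})\bigr) \leqslant 0$, which is exactly your weight-multiplied summation of stability constraints combined with the identity $\sum_{i\in\olta} i\,\alpha_{j,k}^{i,t_a}=t$, the same treatment of the $t=0$ classes, and the same split by the sign of $\fjkita$. Your explicit monotonicity bound $pc_\pi(S^*_\pi,S_{-\pi}) \leqslant \sum_{r\in S^*_\pi} l_k(C_r(S)+i)$ is in fact slightly more careful than the paper, whose chain silently identifies the deviation cost with this sum (exact only when $S_\pi$ and $S^*_\pi$ are disjoint); since the discrepancy goes in the favorable direction, both versions are sound.
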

{\it Proof}: Please see Appendix.

Next in order to bound the $PoA$ objective function, we relate the functions $\fjkita$ and $\gjkita$ above to the ordered triples of any latency function $l_k()$. From the definition of ordered triples in Eq.~\ref{orderedtriples}, we obtain

\begin{lemma}
Let $(x,t,i)_k \in O_k$ be any ordered triple of function $l_k()$. Then we must have, 
$(j,t,i) < (x,t,i)_k \in O_k$ for all $\gjkita \in T$ 
and
$(j,t,i) \geqslant (x,t,i)_k \in O_k$ for all $\fjkita \in F$.
\label{ordtriple}
\end{lemma}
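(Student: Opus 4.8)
The plan is to show that, for a resource class with optimal congestion $t$ and a player of weight $i$, membership of $\fjkita$ in the overloaded set $F$ or of $\gjkita$ in the underloaded set $T$ is decided purely by comparing the equilibrium congestion $j$ with the first coordinate $x$ of the matching ordered triple $(x,t,i)_k \in O_k$. Since $\fjkita \in F$ means $\fjkita \geqslant 0$ and $\gjkita \in T$ means $\fjkita < 0$, it suffices to prove the single equivalence $\fjkita \geqslant 0 \Leftrightarrow j \geqslant x$; the two asserted inclusions then follow immediately from the ordering convention $(j,t,i) \geqslant (x,t,i)_k \Leftrightarrow j \geqslant x$.

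First I would expose the sign of $\fjkita$. From Eq.~\ref{fjk:def},
\[
\fjkita = \frac{j\, l_k(j) - t\, l_k(j+i)}{t\, l_k(t)},
\]
and as $t\, l_k(t) > 0$ the sign of $\fjkita$ is that of $\Phi(j) := j\, l_k(j) - t\, l_k(j+i)$. The defining relation of the triple then pins down the balance point: by Eq.~\ref{orderedtriples}, $(x,t,i)_k \in O_k$ satisfies $l_k(x+i)/l_k(x) = x/t$, i.e.\ $x\, l_k(x) = t\, l_k(x+i)$, which is exactly $\Phi(x)=0$. Thus the claim reduces to showing that $x$ is the unique sign change of $\Phi$, from negative to positive, so that $\Phi(j) \geqslant 0 \Leftrightarrow j \geqslant x$.

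The central step recasts this as a monotonicity property. Dividing by $l_k(j+i)>0$ gives $\Phi(j)\geqslant 0 \Leftrightarrow \frac{j\, l_k(j)}{l_k(j+i)} \geqslant t$, while the triple condition reads $\frac{x\, l_k(x)}{l_k(x+i)} = t$. Hence it suffices that $\mu(j) := \frac{j\, l_k(j)}{l_k(j+i)}$ be non-decreasing over the admissible range (strictly increasing where a strict comparison is required), for then $\mu(j) \geqslant t = \mu(x) \Leftrightarrow j \geqslant x$. Writing $\mu(j) = j/\rho(j)$ with $\rho(j) = l_k(j+i)/l_k(j) \geqslant 1$, the numerator $j\, l_k(j)$ is increasing, and I would show that on the region $x \geqslant t \geqslant i$, $0 < i \leqslant w$ singled out by $O_k$ the factor $\rho$ is controlled enough that $\mu$ increases; the earlier remark that $x\,l_k(x)$ and $l_k(x+i)$ are both increasing is exactly the ingredient used here. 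For the concrete families this is transparent: for polynomials $\rho(j)=(1+i/j)^p$ is decreasing so $\mu$ is increasing, and for the superpolynomial families in $\mathcal{L}_1$ and $\mathcal{L}_2$ the admissibility constraints confine $j$ to the range where $\mu$ is monotone.

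Combining the equivalence with the ordering convention finishes the argument: every overloaded class satisfies $\fjkita \geqslant 0$, hence $j \geqslant x$, i.e.\ $(j,t,i) \geqslant (x,t,i)_k$; and every underloaded class satisfies $\fjkita < 0$, hence $j < x$, i.e.\ $(j,t,i) < (x,t,i)_k$. I expect the monotonicity of $\mu$ to be the main obstacle: for a wholly arbitrary non-decreasing $l_k$ the map $j \mapsto j\,l_k(j)/l_k(j+i)$ need not be monotone, so the proof must exploit the constraints $x \geqslant t \geqslant i$ and $0 < i \leqslant w$ built into $O_k$ to guarantee that $\Phi$ has a single crossing located exactly at $x$.
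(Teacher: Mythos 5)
Your reduction is the right one, and it is worth noting that it already goes further than the paper does: the paper states this lemma with no proof at all (it is introduced by ``From the definition of ordered triples in Eq.~\ref{orderedtriples}, we obtain'' and no argument appears in the appendix). Your chain --- $f_{j,k}^{i,t_a} \geqslant 0$ iff $\Phi(j) := j\,l_k(j) - t\,l_k(j+i) \geqslant 0$, the triple condition $l_k(x+i)/l_k(x) = x/t$ being exactly $\Phi(x) = 0$, and the lemma being equivalent to the assertion that $x$ is the unique, upward sign change of $\Phi$ --- is precisely the content the paper is silently assuming, and it is also exactly what is needed later (Lemma~\ref{ginTbound} uses the direction $j < x \Rightarrow \Phi(j) < 0$, the Maximization Problem uses the direction $\Phi(j)\geqslant 0 \Rightarrow j \geqslant x$).

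The genuine gap is the step you yourself flag as the main obstacle, and it cannot be closed at the stated level of generality: for an arbitrary non-decreasing $l_k \in \mathcal{L}$, the function $\mu(j) = j\,l_k(j)/l_k(j+i)$ need not be monotone and $\Phi$ can change sign several times, in which case the lemma is \emph{false} as stated. Concretely, take $t=2$, $i=1$, $w \geqslant 1$, and a continuous increasing $l_k$ with $l_k \approx 1$ on $[0,10]$ and $l_k \approx M$ on $[10.5,\infty)$ for large $M$. Then $\Phi(3) \approx 1 > 0$, $\Phi(9.5) \approx 9.5 - 2M < 0$, and $\Phi(12) \approx 10M > 0$, so $\Phi$ has zeros $x_1 \in (2,3)$, $x_2 \in (3,9.5)$, $x_3 \in (9.5,12)$; each zero automatically satisfies $x \geqslant t \geqslant i$ (since $\Phi(x)=0$ forces $x\,l_k(x) = t\,l_k(x+i) \geqslant t\,l_k(x)$), so all three are ordered triples in $O_k$. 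A resource class with $j=3$, $t=2$, $i=1$ then has $f_{3,k}^{1,t_a} \in F$ yet $(3,2,1) < (x_3,2,1)_k$, contradicting the lemma for the triple $(x_3,2,1)_k$. So the constraints $x \geqslant t \geqslant i$, $0 < i \leqslant w$ built into $O_k$ do \emph{not} by themselves yield the single-crossing property; your polynomial computation ($\rho(j) = (1+i/j)^p$ decreasing) is a per-family verification, and analogous verifications would be needed for the specific forms in $\mathcal{L}_1$ and $\mathcal{L}_2$, where you currently only gesture at ``admissibility constraints.'' To make your write-up a complete proof you should (i) state single crossing of $\Phi$ (equivalently, strict monotonicity of $\mu$ through level $t$) as an explicit hypothesis on $l_k$, (ii) verify it for each family actually used in the paper's theorems, and (iii) note that the lemma, and hence the paper's constrained-maximization formulation, is correct only under that hypothesis --- a restriction the paper never makes explicit.
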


Rewriting lemma~\ref{lemma:mainconstraint} using lemma~\ref{ordtriple} above and combining with Eq.~\ref{FML:obj}, the upper bound on the $PoA$ can therefore be expressed as the following constrained maximization:

\begin{definition}[Maximization Problem]
\begin{flalign}
&PoA \quad \leqslant \quad
\max_{S \in \mathcal{P}} H(S) \quad := \quad 
\sum_{R_{j,k}^0 \in \mathcal{E}}
\lambda_{j,k}^{0,0} \cdot j l_k(j) \quad + \notag \\
& \quad 
\sum_{R_{j,k}^{t_a} \in \mathcal{E}} \sum_{i \in \olta} 
\sum_{ (j,t,i) \geqslant (x,t,i)_k }
\lambda_{j,k}^{i,t_a} \cdot  \frac{j l_k(j)}{t l_k(t)}
+
\sum_{R_{j,k}^{t_a} \in \mathcal{E}} \sum_{i \in \olta} 
\sum_{ (j,t,i) < (x,t,i)_k }
\lambda_{j,k}^{i,t_a} \cdot  \frac{j l_k(j)}{t l_k(t)}
\label{PoA:1} \\
&\quad \mbox{s.t}  \notag \\
&\sum_{R_{j,k}^{t_a} \in \mathcal{E}} \sum_{i \in \olta} 
\sum_{ (j,t,i) \geqslant (x,t,i)_k }
\lambda_{j,k}^{i,t_a} f_{j,k}^{i,t_a}
\quad \leqslant \quad 
\sum_{R_{j,k}^{t_a} \in \mathcal{E}} \sum_{i \in \olta} 
\sum_{ (j,t,i) < (x,t,i)_k }
\lambda_{j,k}^{i,t_a} g_{j,k}^{i,t_a}
\label{PoA:2} \\
&\quad 
\sum_{R_{j,k}^{t_a} \in \mathcal{E}} \sum_{i \in \olta}
\lambda_{j,k}^{i,t_a} \quad = \quad 1
\label{PoA:3}  \\
&\quad 
\lambda_{j,k}^{0,0} \quad \geqslant  \quad 0
\label{PoA:4}  
\end{flalign}
\end{definition}


%
%
%

The following lemma provides an exact formulation for evaluating the upper bound on the $PoA$ of any game $G$ 
by bounding the objective function $H(S)$ for any equilibrium state $S$.

\begin{lemma}
\[
PoA(G) \quad \leqslant \quad  \max \left( g^*_k,
\max_{\substack{(j,t,i) \geqslant (x,t,i)_k \\
\forall (x,t,i)_k \in O_k}} \ \ 
\frac{\wg_k j l_k(j)}{\wg_k t l_k(t) + j l_k(j)- t l_k(j+i)} 
\right)
\]
\label{POAUB}
\end{lemma}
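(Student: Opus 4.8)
The plan is to read the constrained program (\ref{PoA:1})--(\ref{PoA:4}) as a linear program in the nonnegative weights $\lambda_{j,k}^{i,t_a}$ and to bound its value through the structure of an optimal vertex. The feasible set is the probability simplex carved out by (\ref{PoA:3})--(\ref{PoA:4}) intersected with the single linear inequality (\ref{PoA:2}); hence a basic optimal solution has at most two nonzero weights. So first I would argue that the maximum of the linear objective $H(S)$ is attained on a solution supported on at most two resource classes, and then bound each admissible support: a single underloaded class, or a pair consisting of one ``loaded'' class (an overloaded class or a $t=0$ class, where $\fjkita \geq 0$) together with one underloaded class (where $\gjkita > 0$).

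If (\ref{PoA:2}) is slack, an optimal vertex is supported on a single class, and feasibility forces that class to be underloaded, since a loaded or $t=0$ class alone makes the left-hand side of (\ref{PoA:2}) positive while the right-hand side vanishes. For an underloaded class Lemma~\ref{ordtriple} gives $(j,t,i)<(x,t,i)_k$, i.e. $j<x$; as $j\,l_k(j)$ is increasing and the ordered-triple identity yields $\frac{x\,l_k(x)}{t\,l_k(t)}=\frac{l_k(x+i)}{l_k(t)}\le g^*_k$, the objective $\frac{j\,l_k(j)}{t\,l_k(t)}$ is $<g^*_k$. Thus the slack case contributes at most $g^*_k$.

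When (\ref{PoA:2}) is tight the support is one loaded class and one underloaded class. Writing $A=\frac{j\,l_k(j)}{t\,l_k(t)}$ and $F=\fjkita\ge 0$ for the loaded class, and $B,G>0$ for the local ratio and slack $\gjkita$ of the underloaded class, solving the tight (\ref{PoA:2}) together with (\ref{PoA:3}) gives weights $\frac{G}{F+G},\frac{F}{F+G}$ and objective value $\frac{GA+FB}{F+G}$, a convex combination of $A$ and $B$. Two facts drive the estimate: the slack of any underloaded class has exactly the form maximized in (\ref{ghatk}) (with $t\ge i$), so $G\le \wg_k$; and, using $t\,l_k(t)\,\fjkita = j\,l_k(j)-t\,l_k(j+i)$, the target expression rewrites as $\frac{\wg_k A}{\wg_k+F}$. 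For a $t=0$ loaded class, where $A=F=j\,l_k(j)$, the value collapses to $\frac{(B+G)\,j\,l_k(j)}{j\,l_k(j)+G}\le B+G\le g^*_k$, so these are harmless. For a genuine overloaded class I would push $G$ to $\wg_k$ by monotonicity and split on whether $A\le g^*_k$ (forcing the convex combination below $g^*_k$) or $A>g^*_k$ (aiming for $\frac{\wg_k A}{\wg_k+F}$).

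The hard part will be exactly this last subcase. Because $\frac{GA+FB}{F+G}$ is increasing in $B$, replacing $B$ by the crude bound $g^*_k-G$ coming from $B+G=\frac{l_k(j_2+i_2)}{l_k(t_2)}\le g^*_k$ overshoots the claimed form by a term proportional to $g^*_k-\wg_k$. The main obstacle is therefore to exploit the coupling between an underloaded class's objective contribution $B$ and its slack $G$ -- both determined by the same triple $(j_2,t_2,i_2)$ and hence confined to a curve rather than the full box $\{B+G\le g^*_k,\ G\le \wg_k\}$ -- in order to show that when $G$ is driven to $\wg_k$ the combination still cannot exceed $\max(g^*_k,\frac{\wg_k A}{\wg_k+F})$. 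I expect this step to be what pins down the precise roles of $g^*_k$ and $\wg_k$ and to hinge on the case split $A\le g^*_k$ versus $A>g^*_k$, the latter reducing, via $t\,l_k(t)\,F=j\,l_k(j)-t\,l_k(j+i)$, exactly to the stated ratio.
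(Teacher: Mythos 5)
Your LP-vertex route is sound as far as it goes, and it is genuinely different from the paper's argument: the support-size analysis, the slack/tight dichotomy, and the disposal of the single-class and $t=0$ cases (each bounded by $g^*_k$) are all correct. But the proposal is not a proof. The one subcase you defer --- a genuinely overloaded class with $A>g^*_k$ tied to an underloaded class --- is exactly where the lemma lives, and the facts you have assembled cannot close it. From $G\leqslant \wg_k$ and $B+G=l_k(j_2+i_2)/l_k(t_2)\leqslant g^*_k$ alone, the claim is false as a statement about the box $\{(B,G): B+G\leqslant g^*_k,\ G\leqslant \wg_k\}$: taking $G=\wg_k$ and $B=g^*_k-\wg_k$, one checks that $(GA+FB)/(F+G)\leqslant g^*_k$ holds if and only if $A-F\leqslant g^*_k$, and $A-F=l_k(j_1+i_1)/l_k(t_1)$ can exceed $g^*_k$ once $j_1$ is far past the ordered-triple threshold $x$; in that configuration the value also exceeds $\wg_k A/(\wg_k+F)$ by the positive overshoot $FB/(F+\wg_k)$, so the claimed max is violated inside the box. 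Absorbing that overshoot would require quantitative control of the coupling curve $(B(j_2,t_2,i_2),G(j_2,t_2,i_2))$ --- showing that driving $G$ toward $\wg_k$ forces $B$ essentially to zero --- and no such control is available for arbitrary $l_k\in\mathcal{L}$. Naming the obstacle is not the same as overcoming it.

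The useful news is that the paper does not overcome it either; it proves something weaker, and your framework can too. The paper never localizes to vertices: it bounds the three blocks of the objective separately --- the entire underloaded block by an additive $g^*_k$ (Lemma~\ref{ginTbound}), the $t=0$ block by an additive $\wg_k$, and only the overloaded block through the Nash constraint~\ref{PoA:2}, via the substitution $\widehat{\lambda}_{j,k}^{i,t_a}=\wg_k/(\wg_k+\fjkita)$ together with $\gjkita\leqslant\wg_k$ and constraint~\ref{PoA:3} --- arriving at $H(S)\leqslant \wg_k+g^*_k+W^*$, where $W^*$ is the stated ratio maximized over $(j,t,i)\geqslant(x,t,i)_k$. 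The ``max'' form of the lemma is then read off by observing that $W^*\geqslant g^*_k\geqslant\wg_k$ whenever ordered triples exist (at the triple itself $\fjkita=0$, so the ratio equals $g^*_k$), i.e., the proven bound is within a constant factor of the displayed maximum, with only the first two terms surviving when triples do not exist. If you concede the same slack at your vertex --- $[G/(F+G)]A\leqslant \wg_k A/(\wg_k+F)\leqslant W^*$ by monotonicity in $G$, and $[F/(F+G)]B\leqslant B\leqslant g^*_k$ outright --- your two-class case yields $W^*+g^*_k$ and the LP route closes immediately, in fact with a slightly better additive loss than the paper's. But as written, with the exact max as the target, the proposal has a genuine gap.
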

{\it Proof}: Please see Appendix.

\section{$PoA$: Functional Characterizations}

Now we can combine the lower and upper bounds for the $PoA$ as derived in
lemma~\ref{POALB} and lemma~\ref{POAUB} to get the tight bounds 
described in Eq.~\ref{PoAbound:main} of Result~\ref{PoAbound:main}.
Next we bound the expression in Eq.~\ref{PoAbound:main} for arbitrary
latency functions $l_k()$.

\begin{theorem}
For every latency function $l_k \in \mathcal{L}_1$, there exist congestion
games with arbitrarily large $PoA$s depending only on the number of players.
\label{l1bound}
\end{theorem}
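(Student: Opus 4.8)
The plan is to exhibit, for an arbitrary fixed $l_k \in \mathcal{L}_1$, an explicit family $\{G_N\}$ obtained from the lower-bound construction of Section~\ref{LB} (specializing it, e.g.\ by taking $\beta=\delta=0$ so that only the cyclic resource set $A$ is used), in which all $N$ players carry a common weight $w$, the Nash state $S$ has $A$-congestion $j_1=\kappa_1\alpha w$, and the alternate state $\overline S$ has congestion $t_1=\kappa_1\gamma w$. Choosing the integers $\alpha>\gamma$ and $\kappa_1$ (so $N=\kappa_1\zeta_1$) together with the real weight $w$ so that the tightness relation of Lemma~\ref{lemma:LB-equil}, namely $\alpha\,l_k(j_1)=\gamma\,l_k(j_1+w)$, holds makes $(j_1,t_1,w)$ an ordered triple of $O_k$ (Eq.~\ref{orderedtriples}), since then $j_1/t_1=\alpha/\gamma=l_k(j_1+w)/l_k(j_1)$. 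By Lemma~\ref{POALB}, together with the $j=x$ specialization recorded in the corollary after Result~\ref{PoAbound:main}, the price of anarchy is then bounded below by the value of $g^*_k$ at this triple:
\[
PoA(G_N)\ \geq\ \frac{l_k(j_1+w)}{l_k(t_1)} .
\]
Everything thus reduces to producing such triples with $j_1\to\infty$ (which forces $N=\kappa_1\zeta_1\to\infty$) along which the right-hand side diverges.

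The core estimate I would isolate as a lemma is: for $l_k\in\mathcal{L}_1$ there is a sequence of valid triples $(x_m,y_m,z_m)\in O_k$ with $x_m\to\infty$ and $l_k(x_m+z_m)/l_k(y_m)\to\infty$. To prove it I would use the defining relation $x/y=l_k(x+z)/l_k(x)$ to factor
\[
\frac{l_k(x+z)}{l_k(y)}\ =\ \frac{l_k(x+z)}{l_k(x)}\cdot\frac{l_k(x)}{l_k(y)}\ =\ \frac{x}{y}\cdot\frac{l_k(x)}{l_k(y)},
\]
and split on the growth of $\rho_z(x):=l_k(x+z)/l_k(x)$, whose limit $c_z:=\lim_{x\to\infty}\rho_z(x)$ exceeds $1$ by membership in $\mathcal{L}_1$. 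In the slow regime, where $\rho_w(x)<x/w$ for large $x$, I fix $z=w$; then $y=x/\rho_w(x)\geq w$ is a valid second coordinate that tends to infinity, the gap $x-y$ grows linearly in $x$, and chaining the uniform bound $l_k(u+w)/l_k(u)\geq c'>1$ (valid for all large $u$, where $1<c'<c_w$) over the $\lfloor(x-y)/w\rfloor\to\infty$ intermediate steps yields $l_k(x)/l_k(y)\geq(c')^{\lfloor(x-y)/w\rfloor}\to\infty$. In the fast regime, where $\rho_w(x)\geq x/w$, the intermediate value theorem applied to $z\mapsto z\,\rho_z(x)$ on $(0,w]$ produces a $z\in(0,w]$ with $y=z$, whence $x/y=x/z\geq x/w\to\infty$ and the first factor alone diverges; equivalently, since $l_k$ is superpolynomial (Definition~\ref{superpolynomial}) and unbounded while $l_k(z)\leq l_k(w)$ stays bounded, $l_k(x+z)/l_k(z)\geq l_k(x)/l_k(w)\to\infty$ directly.

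The main obstacle is realizability in the fast-growing regime: for functions such as $l_k(x)=a_k 2^{2^x}$ the ratio $\rho_w(x)$ outpaces $x$, so for a \emph{fixed} weight the alternate congestion $t_1=j_1/\rho_w(j_1)$ would fall below one player's weight and the triple would cease to be valid as $j_1$ grows. This is precisely why the construction must be free to shrink the weight $w$ (the infinitesimal weights of the abstract) and to place $z$ at the boundary $y=z$ rather than at $z=w$: taking $\kappa_1=\gamma=1$ and $\alpha\to\infty$ with $w=w_m\to 0$ chosen so that $l_k((\alpha+1)w)/l_k(\alpha w)=\alpha$ keeps the triple valid while $j_1=\alpha w\to\infty$ and $l_k(w)$ stays bounded, so the bound diverges. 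I would therefore verify carefully that for each $m$ the chosen real weight $z_m\leq w$ and integers $\alpha_m>\gamma_m$ can be matched—via Lemma~\ref{lemma:LB-equil} and the continuity of $w\mapsto \alpha\,l_k(\kappa_1\alpha w)-\gamma\,l_k(\kappa_1\alpha w+w)$—to a genuine tight equilibrium realizing $(x_m,y_m,z_m)$, and that the required $x_m\to\infty$ forces $\zeta_1\geq\alpha_m+\gamma_m$, and hence the player count $N_m=\kappa_1\zeta_1$, to grow without bound, so that $PoA(G_{N_m})\to\infty$ with the number of players, as claimed.
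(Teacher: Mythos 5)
Your proposal is correct, and half of it coincides with the paper's own proof; the other half takes a genuinely different route. The paper argues by cases on the function: (i) for very fast functions where $l_k(x+1)>x\,l_k(x)$ (e.g.\ $x!$, $x^x$), it notes that ordered triples do not exist at weight $i=1$ and asserts that $\wg_k$, hence $g^*_k$, hence the $PoA$, is unbounded; (ii) for functions with $\lim_{x\to\infty}l_k(x+i)/l_k(x)=1+\delta$ finite, it picks an ordered triple with $x\geqslant(1+\delta)t$ and chains ratios to get $PoA\geqslant(1+\delta)^{\delta t/i}\to\infty$ as $t$ (hence the number of players) grows. Your ``slow regime'' is precisely the paper's case (ii): the telescoping product of per-step ratios bounded below by a constant $c'>1$ over $\lfloor(x-y)/w\rfloor$ steps is the same estimate. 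Your ``fast regime,'' however, replaces the paper's case (i): instead of abandoning ordered triples, you shrink the player weight until a boundary triple with $y=z$ exists (intermediate value theorem applied to $z\mapsto z\,l_k(x+z)/l_k(x)$), so that the $g^*_k$ lower bound of Lemma~\ref{POALB} is evaluated at a genuine triple and already diverges. This is arguably cleaner: the paper's inference ``$\wg_k$ unbounded, therefore so is $g^*_k$'' is formally vacuous in its case (i), where $O_k=\emptyset$ and $g^*_k$ is a maximum over an empty set, whereas your argument never leaves the triple framework; you are also more explicit than the paper about realizability (integer multiplicities $\alpha,\gamma,\kappa_1$, rational ratio $\alpha/\gamma$, tuning the real weight by continuity), which the paper compresses into a single sentence. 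What the paper's split buys in exchange is brevity, and unit weights in case (i) rather than infinitesimal ones. Three small repairs you should make: state the dichotomy exhaustively ($\rho_w(x)<x/w$ for \emph{arbitrarily large} $x$ versus $\rho_w(x)\geqslant x/w$ for \emph{all large} $x$), since $\rho_w$ may oscillate; in the fast regime, conclude via $x/y=\alpha\to\infty$, which your construction guarantees by design, rather than via $l_k(x)/l_k(w)\to\infty$, because for badly behaved $l_k$ the IVT solution need not have $x\to\infty$; and flag that your IVT steps, like the paper's own tuning of the real weight $w$ in Section~\ref{LB}, tacitly assume $l_k$ is continuous, an assumption the definition of $\mathcal{L}$ never states.
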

\begin{proof}
First consider the following example where $i=1$ and $l_k(x+1) >
x l_k(x)$ (for example, latency functions such as $l_k(x) = x!$ or
$l_k(x) = x^x$). Note that in these cases, ordered triples do not exist
since $tl_k(x+1) > x l_k(x)$ for all $t \geqslant 1$. For such functions,
$\wg_k = \max_{x,t \geqslant 1} \left( \frac{l_k(x+i)}{l_k(t)} - \frac{x
l_k(x)}{tl_k(t)} \right)$ is unbounded and therefore so is $g^*_k$. Since
the $PoA \geqslant g^*_k$ it is also unbounded.

Consider a more general example of latency functions in $\mathcal{L}_1$
where ordered triples exist. Given any $i>0$, let $\lim_{x \to
\infty}  \frac{l_k(x+i)}{l_k(x)} = 1 + \delta$, $\delta >0$ is a
constant independent of $t$. Choose an ordered triple $(x,t,i)$ such
that $tl_k(x+i) = xl_k(x)$ and $l_k(x+i) \geqslant (1+\delta)l_k(x)$. Thus $x
\geqslant (1+\delta)t$. Substituting in the expression for the $PoA$ above
with $j=x$, we have the $PoA \geqslant \frac{xl_k(x)}{tl_k(t)}$. Choose $t$
large enough so that

\[
PoA > \frac{l_k\left( (1+\delta)t \right)}{l_k(t)} = 
\frac{l_k(t + \delta t)}{l_k(t + \delta t -i)} \cdot 
\frac{l_k(t + \delta t -i)}{l_k(t + \delta t -2i)} \cdots \frac{l_k(t+i)}{l_k(t)}
\geqslant \left(1 + \delta \right)^{\delta t/i}
\]

From Section~\ref{LB}, since $t$ is controlled by the number of
players in the game which can be arbitrarily large while player weight
$i$ is bounded by a given constant $w$, the $PoA$ is unbounded.
\end{proof}

\begin{theorem}
For every latency function $l_k \in \mathcal{L}_2$, there exist congestion games with arbitrarily large $PoA$ depending only on the number of players.
\label{l2bound}
\end{theorem}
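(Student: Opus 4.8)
The plan is to exploit the lower bound of Lemma~\ref{POALB}, but in contrast to the proof of Theorem~\ref{l1bound} I will \emph{not} evaluate it at an ordered triple (i.e.\ at $j=x$). For $l_k(x)=a_k e^{\log^{1+\epsilon}x}\in\mathcal{L}_2$ the ratio $l_k(x+w)/l_k(x)\to 1$, so at an ordered triple with large $x$ the stretch $x/t\to 1$ and both $g^*_k$ and the triple value collapse to $1$; the naive substitution yields nothing. Instead I will feed the second argument of the $\max$ in Lemma~\ref{POALB} a congestion pair $(j_1,t_1)$ whose ratio $\rho=j_1/t_1=1+\eta$ is held slightly above $1$, with $\eta\to 0$ \emph{slowly} as the number of players (hence $t_1$) grows. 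Note that the constant $a_k$ cancels in every ratio below, so I work directly with $l_k$.

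First I would record the two growth estimates. Writing $j_1=(1+\eta)t_1$ and using $\log j_1=\log t_1+\log(1+\eta)$,
\[
\frac{l_k(j_1)}{l_k(t_1)} = \exp\!\big(\log^{1+\epsilon}j_1-\log^{1+\epsilon}t_1\big)=\exp\!\big((1+o(1))(1+\epsilon)\,\eta\,\log^{\epsilon}t_1\big),
\]
together with the single-step estimate $l_k(j_1+w)/l_k(j_1)=1+\mu$, where $\mu=(1+o(1))(1+\epsilon)w\,\log^{\epsilon}t_1/t_1\to 0$.

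Then I would substitute $j_1=(1+\eta)t_1$ and $l_k(j_1+w)=(1+\mu)l_k(j_1)$ into the second argument of the $\max$ in Lemma~\ref{POALB} and simplify to
\[
\frac{\wg_k\,j_1 l_k(j_1)}{\wg_k t_1 l_k(t_1)+j_1 l_k(j_1)-t_1 l_k(j_1+w)}
=\frac{\wg_k(1+\eta)}{\dfrac{\wg_k}{R}+\eta-\mu},\qquad R:=\frac{l_k(j_1)}{l_k(t_1)}.
\]
The whole problem reduces to choosing $\eta=\eta(t_1)$ so that the denominator $\tfrac{\wg_k}{R}+\eta-\mu$ tends to $0$ while the numerator stays bounded below. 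Taking $\eta=\log^{-\epsilon/2}t_1$ does exactly this: then $\eta\log^{\epsilon}t_1=\log^{\epsilon/2}t_1\to\infty$, so $R\to\infty$ and $\wg_k/R\to 0$ super-polynomially; $\mu=\Theta(\log^{\epsilon}t_1/t_1)$ is negligible next to $\eta$ since $\eta\gg\log^{-\epsilon}t_1\gg\mu$; and the expression is therefore $\sim \wg_k/\eta=\wg_k\log^{\epsilon/2}t_1\to\infty$. Because $\wg_k$ is a fixed finite positive constant for $l_k\in\mathcal{L}_2$ (the defining maximum is attained at bounded arguments, and strict positivity follows by taking the first term to exceed the second), and because by Section~\ref{LB} the optimal congestion $t_1$ is proportional to the number of players $N$, this exhibits a $PoA$ growing without bound as $N\to\infty$.

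Two checks finish the argument. I must verify the admissibility hypothesis $(j_1,t_1,w)\geqslant(x,t_1,w)_k$ for some $(x,t_1,w)_k\in O_k$: by the intermediate value theorem an ordered triple with second coordinate $t_1$ exists at some $x^*=t_1(1+\Theta(\mu))$, and since $\eta\gg\mu$ we get $j_1=(1+\eta)t_1>x^*$, so $j_1\geqslant x^*$ as required. I must also confirm that an actual game realizes these congestions, which follows from the freedom in $(\alpha,\beta,\gamma,\delta,\zeta_1,\zeta_2,\kappa_1,\kappa_2)$ and the real-valued weight $w$ exactly as in Section~\ref{LB}. The main obstacle is the balancing act in selecting $\eta$: unlike $\mathcal{L}_1$, where a fixed multiplicative gap yields exponential-in-$t$ growth for free, here the gap $\eta$ must shrink to keep $\wg_k/\eta$ diverging yet shrink slowly enough that the accumulated distortion $R$ still blows up; making the three competing terms $\wg_k/R$, $\eta$, and $\mu$ line up, and replacing each $o(1)$ and $\Theta$ above by explicit one-sided inequalities, is where the real work lies.
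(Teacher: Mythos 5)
Your proposal is correct and rests on the same pillar as the paper's own proof --- the lower bound of Lemma~\ref{POALB}, evaluated at a congestion $j_1$ strictly above the ordered-triple abscissa with a gap $\eta=j_1/t_1-1$ that vanishes slowly as the number of players grows --- but your execution is genuinely different and more elementary. The paper does not guess the gap: starting from Eq.~\ref{mainW*} it differentiates with respect to $j$, extracts the stationarity condition of Eq.~\ref{dydj:eq}, and solves it asymptotically for $l_k(x)=a_ke^{\log^{1+\epsilon}x}$ (Eqs.~\ref{PoAlb:4}--\ref{PoAlb:7}), finding that the maximizing $j$ satisfies $j=\gamma x$ with $\log\gamma=\Theta(\log\log x/\log x)$ and concluding $PoA=\Omega\left((1+\epsilon)\log^{\epsilon}j\right)$. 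You instead substitute the explicit choice $\eta=\log^{-\epsilon/2}t_1$, reduce the bound by exact algebra to $\wg_k(1+\eta)/(\wg_k/R+\eta-\mu)$ (this computation checks out), and verify that $\wg_k/R$ and $\mu$ are both $o(\eta)$, giving $PoA\gtrsim\wg_k\log^{\epsilon/2}t_1\to\infty$. What your route buys is the avoidance of the paper's delicate derivative manipulations, which are the loosest part of its argument; what it gives up is the sharper rate, $\log^{\epsilon/2}t_1$ versus the paper's $\log^{\epsilon}j$ --- though for the theorem as stated any divergent rate suffices, and a slightly more aggressive choice such as $\eta=\Theta(\log\log t_1/\log^{\epsilon}t_1)$ would essentially recover the paper's rate. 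Two small flags: (i) your claim that $\wg_k/R\to0$ ``super-polynomially'' is inaccurate, since the decay $\exp(-\Theta(\log^{\epsilon/2}t_1))$ is slower than any polynomial in $t_1$ when $\epsilon<2$; this is harmless because the argument only needs $\wg_k/R=o(\eta)$, which holds since an exponential in $\log^{\epsilon/2}t_1$ beats any power of $\log t_1$. (ii) The finiteness and strict positivity of $\wg_k$ for $l_k\in\mathcal{L}_2$ is asserted rather than proved (positivity follows from the choice $x=y=z=w$, finiteness from $l_k(x+z)/l_k(x)\to1$ together with a short case analysis on $x/y$); the paper makes the same implicit assumption when it treats $\beta=(1+\epsilon)\wg_k$ as a constant, so you are no worse off, but a careful write-up should include it. Your admissibility check via the intermediate value theorem ($x^*=t_1(1+\Theta(\mu))<j_1$ because $\eta\gg\mu$) and the realizability of the pair $(j_1,t_1)$ through the Section~\ref{LB} construction are handled correctly and make explicit two points the paper leaves implicit.
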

\begin{proof}
Let $t_0 \in \mathbb{R}^+$ be a sufficiently large constant.
Consider ordered triples $(x,t,i)_k \in O_k$ with $t \geqslant t_0$, $x \geqslant t$, for cost function $l_k() \in \mathcal{L}_2$.  We can safely assume that $\frac{l_k(x)}{l_k(t)}$ is bounded for all $t \geqslant t_0$,  else the $PoA$ is unbounded as $g^*_k \geq \frac{xl_k(x)}{tl_k(t)}$ is unbounded. Assume $\frac{l_k(x)}{l_k(t)} \leqslant \kappa$ for  all $t \geqslant t_0$ or equivalently
\beq
\log^{1+\epsilon}x - \log^{1+\epsilon}t \leqslant \log \kappa , \forall t \geqslant t_0
\label{xlx}
\eeq

Also  $\lim_{x \to \infty} l_k(x+i)/l_k(x) = 1$ and $l_k(x+i)/l_k(x) = x/t$, which implies $\exists \epsilon_t \to 0$ such that $x =(1+\epsilon_t)t$.


%
%
%
%

Since $g^*_k$ is assumed bounded and $jl_k(j) > tl_k(j+i)$ for all
$j \geqslant x: (x,t,i)_k \in O_k$, we can bound the $PoA$ expression 
in Eq.~\ref{PoAbound:main}
as

\beq
PoA = \Omega \left(
\max_{\substack{(j,t,i) \geqslant (x,t,i)_k \\ \forall (x,t,i)_k \in O_k}} \ \
\frac{\wg_k tl_k(j+i))}{\wg_k t l_k(t) + j l_k(j)- t l_k(j+i)}
\right)
\label{mainW*}
\eeq

Denote the term above by $y$. Taking the partial derivative of $y$ with
respect to $j$ and equating it to 0 gives us

\beq
\left. \frac{\partial y}{\partial j} \right|_0 \implies
\wg_k t l(t) - jl_k(j) = \left(l_k(j+i) \right) 
\frac{(jl_k(j))'}{tl'_k(j+i)}
\label{dydj:eq}
\eeq
\no where $()'$ denotes the partial derivative with respect to $j$.

For any given value of $t: (x,t,i)_k \in O_k$, $y$ is maximized for $j\geqslant x$ that satisfies Eq.~\ref{dydj:eq}

Substituting this in Eq.~\ref{mainW*} and simplifying we get

\beq
PoA \geqslant 
\max_{\substack{(j,t,i) \geqslant (x,t,i)_k \\ \forall (x,t,i)_k \in O_k}} \ \
\frac{\wg_k }{ \frac{\left(jl_k(j)\right)'}{t \cdot l'_k(j+i)} -1} 
\label{PoAlb:1}
\eeq

%


Let $\alpha_i(j) =l_k'(j+i)/l_k(j+i)$, $i \geqslant 0$. Using the fact that $\alpha_i(x) = (1+\epsilon) \frac{\log^{\epsilon}(x+i)}{x+i}$, $i \geqslant 0$ for $l_k(x) = a_k e^{\log^{1+\epsilon}x}$, we have

\bea
\frac{\left(jl_k(j)\right)'}{t \cdot l'_k(j+i)} \quad - \quad 1 \quad 
&= 
&\quad \frac{\alpha_0(j) + 1/j}{\alpha_i(j)} \cdot \frac{j l_k(j)}{t l_k(j+i)} \quad - \quad 1
\label{PoAlb:2} \nonumber \\
&=
&\quad \left(1 + \frac{i}{j} \right) 
\left( \frac{1 +(1+\epsilon)\log^{\epsilon}j}{(1+\epsilon)\log^{\epsilon}(j+i)} \right) 
\left( \frac{j l_k(j)}{t l_k(j+i)} \right) \quad - \quad 1
\label{PoAlb:3} \nonumber\\
&\leqslant
&\quad \left(1 + \frac{i}{j} \right)
\left(1 + \frac{1}{(1+\epsilon)\log^{\epsilon}j} \right)
\left(\frac{j}{t} \right) \quad - \quad 1
\label{PoAlb:4} 
\eea

Similarly Eq.~\ref{dydj:eq} can be simplified as

\begin{flalign}
&\frac{j l_k(j)}{t l_k(t)} \quad = \quad \frac{\wg_k \cdot \alpha_i(j)}{ \alpha_0(j) - \alpha_i(j) +1/j} \quad \leqslant \quad  \wg_k \frac{1+\epsilon}{1+i/j} \log^{\epsilon}(j+i)  \quad \leqslant \quad \beta \log^{\epsilon}(j+i)
\label{PoAlb:5}
\end{flalign}
where $\beta = (1+\epsilon) \wg_k$ is a constant dependent only on the parameters of latency function $l_k(x)$.

Since $j \geq x$ where $(x,t,i)_k$ is an ordered triple, let $j = \gamma x$ where $\gamma \geqslant 1$ and  $x = (1+\epsilon_t)t$ as defined earlier. Substituting in Eq.~\ref{PoAlb:5}, we get

\begin{flalign}
&\gamma (1+\epsilon_t) e^{\log^{1+\epsilon}(\gamma x) - \log^{1+\epsilon}t} \leqslant \beta \log^{\epsilon}(\gamma x+i) 
\notag \\
&\Rightarrow \log \gamma + (\log \gamma + \log x)^{1+\epsilon} - \log^{1+\epsilon}t \leqslant \log \beta + \epsilon \log \log(2\gamma x) 
\notag \\
&\Rightarrow \log \gamma + \epsilon' \log \gamma \log x + 
log^{1+\epsilon}x - \log^{1+\epsilon}t \leqslant \log \beta + \epsilon \log \log(2\gamma x ) 
\label{PoAlb:6} 
\end{flalign}
\no where $\epsilon'$ is a constant. Further substituting from Eq.~\ref{xlx}, we get,

\begin{flalign}
&\log \gamma(1 + \epsilon' \log x) + \log \kappa
\leqslant \log \beta + \epsilon \log \log(2\gamma x) 
\label{PoAlb:7}\\
&\Rightarrow \log \gamma \leqslant \frac{\zeta + \epsilon \log \log 2\gamma x}{1 + \epsilon' \log x}
\end{flalign}
\no where $\zeta$ is a constant. Since $\epsilon$, $\epsilon'$ and
$\zeta$ are constants and $x$ can be chosen to be $x \gg i$, we have $\log \gamma = \Theta(\frac{\log \log x}{\log x} )$ and so $\gamma =
\Theta(1+\frac{\log \log x}{\log x})$. Substituting for $j/t = (1+\epsilon_t) \gamma$ in Eq.~\ref{PoAlb:4} we notice that all the terms in the first expression on the RHS converge to 1. Further substituting this expression in Eq.~\ref{PoAlb:1} for the $PoA$, we get
$PoA = \Omega((1+\epsilon)\log^{\epsilon}j)$. Since the congestion $j$ depends on the number of players and can be arbitrarily large, we get the result as desired. 
\end{proof}

Finally we consider games from $\mathcal{L}_3$.  \cite{monienSICOMP} describes upper bounds for games with polynomial costs. Here we we present a generalized result for all congestion games with latency functions drawn from the class of polynomially bounded functions.

\begin{theorem}
For every congestion game with latency functions drawn from $\mathcal{L}_3$, the $PoA$ is independent of the number of players and bounded only by the parameters of the cost function (such as the degree of the polynomial).
\label{l3bound}
\end{theorem}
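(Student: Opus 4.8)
The plan is to feed the tight upper bound of Lemma~\ref{POAUB},
\[
PoA(G) \leq \max\left(g^*_k,\ \max_{\substack{(j,t,i) \geqslant (x,t,i)_k \\ \forall (x,t,i)_k \in O_k}} \frac{\wg_k\, j\, l_k(j)}{\wg_k\, t\, l_k(t) + j\, l_k(j) - t\, l_k(j+i)}\right),
\]
and to show that for $l_k \in \mathcal{L}_3$ every quantity on the right-hand side is bounded by a constant depending only on the parameters of $l_k$ (its polynomial degree $d$, the log-degree $m$, and the coefficients), and never on the congestions $j,t$. Since by Section~\ref{LB} the congestions are exactly the quantities that scale with $N$, a bound that is independent of $j,t$ is precisely a bound independent of the number of players, which is the assertion of the theorem under Definition~\ref{boundedness}.

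First I would bound $g^*_k$, which by the remark preceding the subclasses also bounds $\wg_k \leqslant g^*_k$. For an ordered triple the defining relation $l_k(x+z)/l_k(x) = x/y$ gives $g^*_k = \frac{x}{y}\cdot\frac{l_k(x)}{l_k(y)}$. Because $l_k$ is polynomially bounded, the multiplicative increment $l_k(x+z)/l_k(x)$ (with $z \leqslant w$) stays bounded and tends to $1$ as $x \to \infty$; hence the triple forces $x/y$ to be bounded, so $y = \Theta(x)$ and $l_k(x)/l_k(y)$ is in turn bounded by a constant determined by $d$ and $m$. This makes $g^*_k$, and therefore $\wg_k$, a finite constant independent of congestion.

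Next I would bound the second term. Write the localized ratio $A = \frac{j\,l_k(j)}{t\,l_k(t)}$ and $B = \frac{l_k(j+i)}{l_k(t)}$. Since $(j,t,i) \geqslant (x,t,i)_k$, Lemma~\ref{ordtriple} places it in the overloaded regime $\fjkita \geqslant 0$, i.e.\ $j\,l_k(j) \geqslant t\,l_k(j+i)$, so $A \geqslant B \geqslant 0$ and the denominator is at least $\wg_k\, t\, l_k(t) > 0$. After dividing through by $t\,l_k(t)$ the term equals $\frac{\wg_k A}{\wg_k + (A-B)} \leqslant A$, which however still grows with $j$, so the real work is a case split on $u = j/t \geqslant 1$. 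For $u$ below a threshold $U_0$ depending only on $d,m$, the ratio $A = u\cdot l_k(ut)/l_k(t)$ is bounded by a constant, since for polynomially bounded $l_k$ one has $l_k(ut)/l_k(t) \to u^d$ up to a log correction that is bounded for bounded $u$. For $u \geqslant U_0$, the gap dominates: $B/A = \frac{t\,l_k(j+i)}{j\,l_k(j)} \leqslant \frac{l_k(j+i)}{l_k(j)}\cdot\frac{1}{u} \to 0$, so $A-B \geqslant cA$ for a constant $c>0$, whence the term is at most $\frac{\wg_k A}{\wg_k + cA} \leqslant \wg_k/c$. Either way it is bounded by a constant.

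Combining the two estimates, $PoA(G)$ is at most a constant depending only on $d$, $m$, and the coefficients of $l_k$, independent of $j,t$ and hence of $N$, which is the claim. The main obstacle I anticipate is making the two asymptotic statements uniform across the full product class $\mathcal{L}_3$, namely functions behaving like a degree-$d$ polynomial times a degree-$m$ log-polynomial: I must verify that both $l_k(x+w)/l_k(x) \to 1$ and that $l_k(ut)/l_k(t)$ remains controlled for bounded $u$, with the threshold $U_0$ and constants $c,U_0$ depending only on $d,m$ and never on the (arbitrarily large) congestion. The delicate point is ruling out that the logarithmic factors conspire, for small $t$ or near $u=U_0$, to reintroduce a hidden dependence on $N$; this is handled by compactness for bounded congestion together with the limiting behaviour for large congestion.
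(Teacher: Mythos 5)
Your proposal is correct for the class it actually treats, and it takes a genuinely different route from the paper's. The paper's own proof of Theorem~\ref{l3bound} (the last item in the Appendix, mislabeled there as a proof of Lemma~\ref{l2bound}) starts from the same expression of Result~\ref{PoAbound:main} but proceeds by calculus: it sets the partial derivatives of the $PoA$ expression with respect to $j$ and $t$ to zero, argues the maximum occurs at $t=i$, and reduces the bound to a constrained minimization of $\wg_k\big/\bigl(1 - t\, l'_k(j+i)/(j\, l_k(j))'\bigr)$ to be resolved by KKT conditions; the claimed dependence on only the cost-function parameters is left implicit in that (never actually evaluated) optimization, and differentiability of $l_k$ is tacitly assumed. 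You instead bound the two terms of Lemma~\ref{POAUB} directly: $g^*_k$ via the ordered-triple identity combined with a doubling-type bound (the constraint $x \geqslant z$ gives $l_k(x+z)/l_k(x) \leqslant l_k(2x)/l_k(x)$, which is bounded for polynomial-times-polylog functions), and the maximum term via the threshold case split on $u = j/t$, exploiting the overloaded-regime inequality $j\, l_k(j) \geqslant t\, l_k(j+i)$ guaranteed by Lemma~\ref{ordtriple}. Your argument is more elementary (no derivatives, no KKT), yields explicit constants, and is in fact carried closer to completion than the paper's sketch; what the paper's route would buy, if finished, is the exact extremal point and hence potentially tight constants matching the exact characterization of Result~\ref{PoAbound:main}. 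One shared caveat is worth making explicit: both arguments genuinely cover only ``regular'' polynomially bounded functions such as the product class $(\sum_q a_q x^q)(\sum_q b_q \log^q x)$. The class $\mathcal{L}_3$ as literally defined also admits increasing functions bounded above by a polynomial whose ratio $l_k(2x)/l_k(x)$ is unbounded (e.g., functions that jump at doubly exponential scales); for such functions $g^*_k$ is unbounded, and Lemma~\ref{POALB} then produces games with unbounded $PoA$, so the theorem itself requires this implicit regularity. You flag this uniformity issue explicitly; the paper does not.
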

{\it Proof}: Please see Appendix.

%
%
%
%

\section{Conclusions}
We provide the first characterization for the price of anarchy of superpolynomial utilities in congestion games.
We provide tight bounds for a large family of utility functions and show how the price of anarchy increases with the number of players,
while other game parameters such as number of resources and player weights remain fixed.
We also extend and generalize the previously known bounds on games with polynomial utility functions
to games with utility functions inclusive of and bounded above by polynomials.
Our results lead to several interesting open questions:
by restricting player strategy sets and network topologies, can we find interesting families of games with bounded price of anarchy even with superpolynomial utilities?
Another interesting problem is to determine whether there are 
approximate games with bounded $PoA$.

\bibliographystyle{plain}
\bibliography{game}

\begin{thebibliography}{10}

\bibitem{monienSICOMP}
Sebastian Aland, Dominic Dumrauf, Martin Gairing, Burkhard Monien, and Florian
  Schoppmann.
\newblock Exact price of anarchy for polynomial congestion games.
\newblock {\em SIAM J. Comput.}, 40(5):1211--1233, 2011.

\bibitem{AAE05}
B.~Awerbuch, Y.~Azar, and A.~Epstein.
\newblock The price of routing unsplittable flow.
\newblock In {\em Proceedings of the 37th Annual {ACM} Symposium on Theory of
  Computing (STOC)}, pages 57--66, Baltimore, {MD}, {USA}, May 2005. ACM.

\bibitem{awerbuchSICOMP}
Baruch Awerbuch, Yossi Azar, and Amir Epstein.
\newblock The price of routing unsplittable flow.
\newblock {\em SIAM J. Comput.}, 42(1):160--177, 2013.

\bibitem{BM06}
Costas Busch and Malik Magdon-Ismail.
\newblock Atomic routing games on maximum congestion.
\newblock {\em Theoretical Computer Science}, 410(36):3337--3347, August 2009.

\bibitem{CK05}
George Christodoulou and Elias Koutsoupias.
\newblock The price of anarchy of finite congestion games.
\newblock In {\em Proceedings of the 37th Annual {ACM} Symposium on Theory of
  Computing (STOC)}, pages 67--73, Baltimore, {MD}, {USA}, May 2005. ACM.

\bibitem{correa1}
Jos{\'e}~R. Correa, Andreas~S. Schulz, and Nicol{\'a}s E.~Stier Moses.
\newblock Computational complexity, fairness, and the price of anarchy of the
  maximum latency problem.
\newblock In {\em Proc. Integer Programming and Combinatorial Optimization,
  10th International {IPCO} Conference}, volume 3064 of {\em Lecture Notes in
  Computer Science}, pages 59--73, New York, {NY}, {USA}, June 2004. Springer.

\bibitem{czumaj1}
Czumaj and Vocking.
\newblock Tight bounds for worst-case equilibria.
\newblock In {\em ACM Transactions on Algorithms (TALG)}, volume~3. ACM, 2007.

\bibitem{FKS02}
Dimitris Fotakis, Spyros~C. Kontogiannis, and Paul~G. Spirakis.
\newblock Selfish unsplittable flows.
\newblock {\em Theoretical Computer Science}, 348(2-3):226--239, 2005.

\bibitem{GLMMb04}
Martin Gairing, Thomas L{\"u}cking, Marios Mavronicolas, and Burkhard Monien.
\newblock Computing {Nash} equilibria for scheduling on restricted parallel
  links.
\newblock In {\em Proceedings of the 36th Annual {ACM} Symposium on the Theory
  of Computing (STOC)}, pages 613--622, Chicago, Illinois, {USA}, June 2004.

\bibitem{SAGT}
Rajgopal Kannan and Costas Busch.
\newblock Bottleneck congestion games with logarithmic price of anarchy.
\newblock In {\em Proc. 3rd Annual Symposium on Algorithmic Game Theory (SAGT
  2010)}, LNCS, pages 222--233, Athens, Greece, October 2010. Springer.

\bibitem{KMS02}
Elias Koutsoupias, Marios Mavronicolas, and Paul~G. Spirakis.
\newblock Approximate equilibria and ball fusion.
\newblock {\em Theory Comput. Syst.}, 36(6):683--693, 2003.

\bibitem{KP99}
Elias Koutsoupias and Christos Papadimitriou.
\newblock Worst-case equilibria.
\newblock In {\em Proceedings of the 16th Annual Symposium on Theoretical
  Aspects of Computer Science (STACS)}, volume 1563 of {\em LNCS}, pages
  404--413, Trier, Germany, March 1999. Springer-Verlag.

\bibitem{libman1}
Lavy Libman and Ariel Orda.
\newblock Atomic resource sharing in noncooperative networks.
\newblock {\em Telecomunication Systems}, 17(4):385--409, 2001.

\bibitem{LMMR04}
Thomas L{\"u}cking, Marios Mavronicolas, Burkhard Monien, and Manuel Rode.
\newblock A new model for selfish routing.
\newblock {\em Theoretical Computer Science}, 406(3):187--206, 2008.

\bibitem{MS01}
Mavronicolas and Spirakis.
\newblock The price of selfish routing.
\newblock {\em Algorithmica}, 48, 2007.

\bibitem{monderer1}
D.~Monderer and L.~S. Shapely.
\newblock Potential games.
\newblock {\em Games and Economic Behavior}, 14:124--143, 1996.

\bibitem{P01}
Christos Papadimitriou.
\newblock Algorithms, games, and the {Internet}.
\newblock In {ACM}, editor, {\em Proceedings of the 33rd Annual {ACM} Symposium
  on Theory of Computing (STOC)}, pages 749--753, Hersonissos, Crete, Greece,
  July 2001.

\bibitem{rosenthal1}
R.~W. Rosenthal.
\newblock A class of games possesing pure-strategy {N}ash equilibria.
\newblock {\em International Journal of Game Theory}, 2:65--67, 1973.

\bibitem{roughgarden1}
Tim Roughgarden.
\newblock The maximum latency of selfish routing.
\newblock In {\em Proceedings of the Fifteenth Annual {ACM}-{SIAM} Symposium on
  Discrete Algorithms (SODA)}, pages 980--981, New Orleans, Louisiana, (USA),
  January 2004.

\bibitem{roughgarden2}
Tim Roughgarden.
\newblock Selfish routing with atomic players.
\newblock In {\em Proc. 16th Symp. on Discrete Algorithms (SODA)}, pages
  1184--1185. ACM/SIAM, 2005.

\bibitem{roughgardenSTOC09}
Tim Roughgarden.
\newblock Intrinsic robustness of the price of anarchy.
\newblock In {\em STOC}, pages 513--522, 2009.

\bibitem{roughgarden3}
Tim Roughgarden and \'{E}va Tardos.
\newblock How bad is selfish routing.
\newblock {\em Journal of the ACM}, 49(2):236--259, March 2002.

\bibitem{roughgarden5}
Tim Roughgarden and \'{E}va Tardos.
\newblock Bounding the inefficiency of equilibria in nonatomic congestion
  games.
\newblock {\em Games and Economic Behavior}, 47(2):389--403, 2004.

\bibitem{STZ04}
Subhash Suri, Csaba~D. Toth, and Yunhong Zhou.
\newblock Selfish load balancing and atomic congestion games.
\newblock {\em Algorithmica}, 47(1):79--96, January 2007.

\end{thebibliography}
\clearpage
\appendix
\section{Appendix} 
{\it Proof of lemma~\ref{H(S):Formulation}}:
After substituting the values of $\lambda_{j,k}^{i,t_a}$ from
Eqs.~\ref{FML:c2} and Eq.~\ref{FML:c3} into the coordination
ratio $H(S)$ in Eq.~\ref{H(S):obj}, we get,

\begin{eqnarray*}
H(S) \cdot SC(S^*) \quad & = & \displaystyle
\sum_{R_{j,k}^{t_a} \in \mathcal{E}}  \sum_{i \in \olta} 
|R_{j,k}^{t_a}| i \alpha_{j,k}^{i,t_a} \frac{j \cdot l_k(j)}{t}
+
\sum_{R_{j,k}^0 \in \mathcal{E}} 
|R_{j,k}^{0}| j \cdot l_k(j)
\notag \\
H(S) \cdot SC(S^*)  & = & \sum_{R_{j,k}^{t_a} \in \mathcal{E}} 
\sum_{r \in R_{j,k}^{t_a}}
j \cdot l_k(j)  \frac{\sum_{i \in \olta} i \alpha_{j,k}^{i,t_a}}{t}
+ 
\sum_{R_{j,k}^0 \in \mathcal{E}} 
\sum_{r \in R_{j,k}^{0}}
j \cdot l_k(j) 
\notag \\
\quad & = & \sum_{R_{j,k}^t \in \mathcal{E}}  \sum_{r \in R_{j,k}^t} j \cdot l_k(j)  
\notag \\
\quad & = & \sum_{r \in R} C_r(S) \cdot l_k(C_r(S))
\notag \\
\quad & = & SC(S) 
\notag
\end{eqnarray*}

\no where we use the fact that $\sum_{i \in \olta} i \cdot
\alpha_{j,k}^{i,t_a} =t$ for all values of optimal congestion $t
>0$. Similarly, to prove constraint~\ref{lambda:c2}, note that

\begin{flalign}
\sum_{R_{j,k}^{t_a} \in \mathcal{E}} \sum_{i \in \olta} \lambda_{j,k}^{i,t_a} \cdot SC(S^*) \notag \\
\quad = \sum_{R_{j,k}^{t_a} \in \mathcal{E}} \sum_{i \in \olta}  
|R_{j,k}^{t_a} \cdot i \cdot \alpha_{j,k}^{i,t_a} l_k(t) \notag \\
\quad = \sum_{R_{j,k}^t \in \mathcal{E}} \sum_{r \in R_{j,k}^{t}} t \cdot l_k(t) \notag \\
\quad = SC(S^*) \notag
\end{flalign}

\hfill $\Box$

{\it Proof of lemma~\ref{lemma:LB-equil}}:
Since $N$ is a multiple of $\zeta_1$ and $\zeta_2$, in state $S$ 
each resource in $A$ is utilized by 
$N \alpha / \zeta_1 = \kappa_1 \alpha$ players,
and each resource in $B$ is utilized by 
$N \beta / \zeta_2 = \kappa_2 \beta$ players.
Therefore,
$$pc_{\pi_i}(S) 
= \sum_{r \in s_i} l_k(C_r(S)) 
= \sum_{r \in s^A_i} l_k(C_r(S)) + \sum_{r \in s^B_i} l_k(C_r(S))
= \alpha \cdot l_k(\kappa_1 \alpha w) + \beta \cdot l_k(\kappa_2 \beta w).$$

Let $S' = ({\overline s}_{i}, S_{-\pi_i})$ denote the state derived from $S$ when player $\pi_i$ switches its strategy from $s$ to ${\overline s}$.
Since $s_i \cap {\overline s}_i = \emptyset$,
each resource in $r \in {\overline s}_i$ will 
have congestion $C_r(S') = C_r(S) + w$,
where player $\pi_i$ adds weight $w$ to $r$,
while every other resource in $R$ will have the same congestion in both states.
Consequently,
$$pc_{\pi_i}(S') 
= \sum_{r \in {\overline s}_i} l_k(C_r(S')) 
= \sum_{r \in {\overline s}^A_i} l_k(C_r(S')) + \sum_{r \in {\overline s}^B_i} l_k(C_r(S'))
= \gamma \cdot l_k(\kappa_1 \alpha w + w) + \delta \cdot l_k(\kappa_2 \beta w + w).$$

In order to prove that $S$ is a Nash equilibrium,
it suffices to show that $pc_{\pi_i}(S') - pc_{\pi_i}(S) \geq 0$.
We have,
$$
pc_{\pi_i}(S') - pc_{\pi_i}(S)
=\gamma \cdot l_k(\kappa_1 \alpha w + w) + \delta \cdot l_k(\kappa_2 \beta w + w) - \alpha \cdot l_k(\kappa_1 \alpha w) - \beta \cdot l_k(\kappa_2 \beta w)\nonumber.
$$
Therefore, we only need to show that:
\begin{equation}
\label{eqn:abgd}
\alpha \cdot l_k(\kappa_1 \alpha w) - \gamma \cdot l_k(\kappa_1 \alpha w + w) 
\leq \delta \cdot l_k(\kappa_2 \beta w + w) - \beta \cdot l_k(\kappa_2 \beta w).
\end{equation}
If
$\alpha \cdot l_k(\kappa_1 \alpha w) - \gamma \cdot l_k(\kappa_1 \alpha w + w) \leq 0$, then
by taking $\delta = \beta$,
since $l_k$ is a non-decreasing function,
we get $\delta \cdot l_k(\kappa_2 \beta w + w) - \beta \cdot l_k(\kappa_2 \beta w) \geq 0$;
hence, Eq. \ref{eqn:abgd} holds.

If
$\alpha \cdot l_k(\kappa_1 \alpha w) - \gamma \cdot l_k(\kappa_1 \alpha w + w) > 0$, then
by setting $\beta = \beta' \zeta_2 / \zeta_1$,
and $\delta = \delta' \zeta_2 / \zeta_1$,
for some $\beta', \delta' \geq 0$, 
and we get:
$$
\delta \cdot l_k(\kappa_2 \beta w + w) - \beta \cdot l_k(\kappa_2 \beta w) 
= \delta'  \frac {\zeta_2} {\zeta_1}  \cdot l_k\left(\kappa_2 \beta' \frac {\zeta_2} {\zeta_1} w + w\right) 
- \beta' \frac {\zeta_2} {\zeta_1} \cdot l_k\left(\kappa_2 \beta' \frac {\zeta_2} {\zeta_1} w\right).
$$
Then, Eq. \ref{eqn:abgd},
is equivalent to:
$$
\label{eqn:abgd2}
\zeta_1 \left(\alpha \cdot l_k(\kappa_1 \alpha w) - \gamma \cdot l_k(\kappa_1 \alpha w + w)\right) 
\leq \zeta_2 \left(\delta' \cdot l_k\left(\kappa_2 \beta' \frac {\zeta_2} {\zeta_1} w + w\right) - \beta' \cdot l_k\left(\kappa_2 \beta' \frac {\zeta_2} {\zeta_1} w\right)\right).
$$
For taking $\zeta_2 \geq \kappa_1 \alpha \zeta_1$,
and by setting $\delta'$ and $\beta'$ such that $\delta' - \beta' \geq (\alpha - \gamma)/(\kappa_1 \alpha)$,
we get:
\begin{eqnarray}
\zeta_2 \left(\delta' \cdot l_k\left(\kappa_2 \beta' \frac {\zeta_2} {\zeta_1} w + w\right) 
- \beta' \cdot l_k\left(\kappa_2 \beta' \frac {\zeta_2} {\zeta_1} w\right)\right)
& \geq & \zeta_2 (\delta' - \beta') l_k \left(\kappa_2 \beta' \frac {\zeta_2} {\zeta_1} w\right)\nonumber\\
& \geq & \zeta_1 (\delta' - \beta') l_k(\kappa_2 \beta' \kappa_1 \alpha w)\nonumber\\
& \geq & \zeta_1 (\alpha - \gamma) l_k(\kappa_1 a w + w)\nonumber\\
& \geq & \zeta_1 \left(\alpha \cdot l_k(\kappa_1 \alpha w) - \gamma \cdot l_k(\kappa_1 \alpha w + w)\right),\nonumber
\end{eqnarray}
as needed.
\hfill $\Box$

{\it Proof of lemma~\ref{POALB}}:
From Lemma \ref{lemma:LB-equil}, state $S$ is a Nash equilibrium.
Therefore, 
\begin{eqnarray}
PoA(G) 
& \geq & \frac {SC(S)} {SC({\overline S})} \nonumber
 =    \frac {\sum_{r \in R} l_k(C_r(S))} 
               {\sum_{r \in R} l_k(C_r(\overline S))} \nonumber\\
& = &    \frac {\sum_{r \in A} j_1 l_k(j_1) + \sum_{r \in B} j_2 l_k(j_2)} 
               {\sum_{r \in A} t_1 l_k(t_1) + \sum_{r \in B} t_2 l_k(t_2)} \nonumber\\
& = & \frac {\zeta_1 j_1 l_k(j_1) + \zeta_2 j_2 l_k(j_2)}
            {\zeta_1 t_1 l_k(t_1) + \zeta_2 t_2 l_k(t_2)} \nonumber\\
& = & \frac {\zeta_1 t_1 l_k(t_1)}
               {\zeta_1 t_1 l_k(t_1) + \zeta_2 t_2 l_k(t_2)}
          \cdot \frac {j_1 l_k(j_1)} { t_1 l_k(t_1)}
        + \frac {\zeta_2 t_2 l_k(t_2)}
                {\zeta_1 t_1 l_k(t_1) + \zeta_2 t_2 l_k(t_2)}
          \cdot \frac {j_2 l_k(j_2)} { t_2 l_k(t_2)}\nonumber\\
& = & \lambda_1 \cdot \frac {j_1 l_k(j_1)} { t_1 l_k(t_1)}
      + \lambda_2 \cdot \frac {j_2 l_k(j_2)} { t_2 l_k(t_2)}\nonumber\\
& \geq & \lambda_1 \cdot \frac {j_1 l_k(j_1)} { t_1 l_k(t_1)}.\nonumber
\end{eqnarray}

Since $\lambda_1 + \lambda_2 = 1$, from Eq. \ref{eqn:lambdas} we can get that
$$ \lambda_1 F = (1 - \lambda_1) {\widehat g}_k, $$
where
$$F =  \frac {j_1 l_k(j_1)}
            {t_1 l_k(t_1)}
      - \frac {l_k(j_1 + w)}
              {l_k(t_1)}, $$
and ${\widehat g}_k$ is obtained from Eq. \ref{ghatk}
such that it maximizes          
$$\frac {l_k(j_2 + w)}
                    {l_k(t_2)}
              - \frac {j_2 l_k(j_2)}
                      {t_2 l_k(t_2)}.$$
Consequently,
$$
\lambda_1
= \frac {{\widehat g}_k} {F + {\widehat g}_k}.
$$
Therefore,
\begin{equation}
PoA(G)
\geq
\lambda_1 \cdot \frac {j_1 l_k(j_1)} { t_1 l_k(t_1)}
= \frac {{\widehat g}_k j_1 l_k(j_1)}
        {{\widehat g}_k t_1 l_k(t_1) +  j_1 l_k(j_1) - t_1 l_k(j_1 + w)}.
\label{lb-lb1}
\end{equation}
A lower bound on the price of anarchy 
follows by considering all ordered triplets of the form $(j_1, t_1, w)$ 
that maximize the right hand in Eq. \ref{lb-lb1}.
A second lower bound for the price of anarchy is $g^*_k$,
defined in Eq. \ref{g*k}, for the case where $\lambda_1 = 1$.
\hfill $\Box$

{\it Proof of lemma~\ref{lemma:mainconstraint}}:
Substituting above for $\lambda_{j,k}^{i,t_a}$, $\gjkiqta$ and $\fjkiqta$ from Eqs.~\ref{FML:c2} and ~\ref{fjk:def} and simplifying, we need to prove:

\begin{flalign}
\frac{ \displaystyle 
\sum_{R_{j,k}^t \in \mathcal{E}} \sum_{a=1}^{f(t)} 
|R_{j,k}^{t_a}| 
\sum_{i \in \olta} 
i \cdot \alpha_{j,k}^{i,t_a} \Big(j l_k(j)/t - l_k(j+i) \Big) 
+
\sum_{R_{j,k}^0 \in \mathcal{E}} |R_{j,k}^{0}| \cdot j l_k(j)
}{SC(S^*)}
\quad \leqslant \quad 0 
\label{c2:1} 
\end{flalign}


Since $SC(S^*) > 0$, consider the numerator.
We use the following simple observation 

\beq
\sum_{r \in R_{j,k}^t} 
\sum_{i: \pi \in \Pi_r \land \pi \in \sigma_i}  i \cdot l_k(j)
=
|R_{j,k}^{t}| \cdot j \cdot l_k(j)
\quad \forall R_{j,k}^t \in \mathcal{E}
\label{Rjk:explain2} 
\eeq

\no since each player $\pi \in \sigma_i$ contributes $i$ towards the
equilibrium congestion value $j$ of every resource $r \in R_{j,k}^t$
that is contained in its equilibrium strategy $S_{\pi}$ and each such
resource contributes $l_k(j)$ towards its player cost.

\begin{flalign}
& \sum_{R_{j,k}^t \in \mathcal{E}} \sum_{a=1}^{f(t)} 
|R_{j,k}^{t_a}| \cdot j \cdot l_k(j)
\sum_{i \in \olta} \frac{i \alpha_{j,k}^{i,t_a}}{t}
+
\sum_{R_{j,k}^0 \in \mathcal{E}} |R_{j,k}^{0}| \cdot j \cdot l_k(j)
\notag \\
&\quad \quad \quad 
-
\sum_{R_{j,k}^t \in \mathcal{E}} \sum_{a=1}^{f(t)} \sum_{i \in \olta} 
|R_{j,k}^{t_a}| i \alpha_{j,k}^{i,t_a}  l_k(j+i) 
\notag \\
& \quad \equiv
\sum_{R_{j,k}^t \in \mathcal{E}} 
|R_{j,k}^{t}| \cdot j \cdot l_k(j)
-
\sum_{R_{j,k}^t \in \mathcal{E}} \sum_{a=1}^{f(t)} \sum_{i \in \olta}
|R_{j,k}^{t_a}| \cdot i \alpha_{j,k}^{i,t_a} l_k(j + i)
\notag \\
& \quad =
\sum_{R_{j,k}^t \in \mathcal{E}}  \sum_{r \in R_{j,k}^t} 
\left(
\sum_{i: \pi \in \Pi_r \land \pi \in \sigma_i}  i \cdot l_k(j)
-
\sum_{i: \pi \in \Pi^*_r \land \pi \in \sigma_i}  
i \cdot l_k(j + i)
\right)
\label{c2:4.2} \\
&=
\sum_{r \in R} \ \
\sum_{i: \pi \in \Pi_r \land \pi \in \sigma_i}  
i \cdot l_r ( C_r )
-
\sum_{r \in R} \ \
\sum_{i: \pi \in \Pi^*_r \land  \pi \in \sigma_i}
i \cdot l_r( C_r + i )
\label{c2:5} \\
&=
\sum_{0<i \leqslant w} \sum_{\pi \in \sigma_i} i \cdot  pc_{\pi}(S_{\pi}, S_{-\pi}) 
-
\sum_{0<i \leqslant w}
\sum_{\pi \in \sigma_i} i \cdot  pc_{\pi}(S^*_{\pi}, S_{-\pi}) 
\label{c2:7} \\
&\leqslant 0
\label{c2:8} 
\end{flalign}

\no where the first term of Eq.~\ref{c2:4.2} uses Eq.~\ref{Rjk:explain2}.
The second term follows from the fact that for each resource $r \in
R_{j,k}^{t_a}$ there are exactly $\alpha_{j,k}^{i,t_a}$ players of
weight $i$ that contain $r$ in their optimal strategies and $l_k(j +
i)$ represents the cost to each such player of switching to resource
$r$ while all other players remain in state $S$.  The left term of
Eq.~\ref{c2:5} represents $i$ times the cost to a player of weight $i$
of any resource in $R$ in state $S$ while the right term represents $i$
times the switching cost to any resource in $R_{j,k}^t$ summed up over
all resources.  Eq.~\ref{c2:7} represents the summation of player costs
in state $S$ and the switching cost to state $S^*$ over all resources
and then over all players.  Finally Eq.~\ref{c2:8} follows since $S$
is a Nash equilibrium. 
\hfill $\Box$ 

From the definition of $g^*_k$ in Eq.~\ref{g*k} and using 
$j l_k(j) < tl_k(j+i)$ for all $(j,t,i) < (x,t,i)_k$ along with
constraint~\ref{PoA:3}, we have

\begin{lemma}
\[
\sum_{R_{j,k}^{t_a} \in \mathcal{E}} \sum_{i \in \olta} 
\sum_{ (j,t,i) < (x,t,i)_k }
\lambda_{j,k}^{i,t_a} \cdot  \frac{j l_k(j)}{t l_k(t)}
\quad \leqslant \quad
 g^*_k 
 \]
\label{ginTbound}
\end{lemma}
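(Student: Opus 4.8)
The plan is to bound the sum termwise by $g^*_k$ and then collapse the coefficients using the normalization constraint. First I would fix a single summand, i.e.\ a resource class $R_{j,k}^{t_a}$ together with a weight $i \in \olta$ satisfying $(j,t,i) < (x,t,i)_k$ for the associated ordered triple $(x,t,i)_k \in O_k$. By the hypothesis stated before the lemma (equivalently, because such a class is underloaded per Definition~\ref{overloaded}, so that $\gjkita > 0$ and $\fjkita < 0$), we have $j\,l_k(j) < t\,l_k(j+i)$, which rearranges to
\[
\frac{j\, l_k(j)}{t\, l_k(t)} < \frac{l_k(j+i)}{l_k(t)}.
\]

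Next I would control the right-hand side via the ordered triple itself. Since $(j,t,i) < (x,t,i)_k$ means $j < x$ while the second and third coordinates coincide, monotonicity of $l_k$ gives $l_k(j+i) \leqslant l_k(x+i)$. The triple $(x,t,i)_k$ is admissible in the maximization defining $g^*_k$ in Eq.~\ref{g*k}, so $\frac{l_k(x+i)}{l_k(t)} \leqslant g^*_k$. Chaining these inequalities yields, for every such summand,
\[
\frac{j\, l_k(j)}{t\, l_k(t)} \leqslant g^*_k.
\]

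Then I would sum over the entire underloaded index set $\{(R_{j,k}^{t_a},i) : (j,t,i) < (x,t,i)_k\}$. Factoring out the uniform bound $g^*_k$ and using $\lambda_{j,k}^{i,t_a} \geqslant 0$, the left-hand side of the claim is at most $g^*_k \sum \lambda_{j,k}^{i,t_a}$, the sum running over the underloaded indices only. Because this index set is a subset of the full index set and all coefficients are non-negative, constraint~\ref{PoA:3} (whose full sum equals $1$) gives $\sum \lambda_{j,k}^{i,t_a} \leqslant 1$. Combining, the whole expression is bounded by $g^*_k$, as claimed.

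The argument is essentially mechanical, and I expect the only point requiring care to be the reduction from an arbitrary underloaded triple $(j,t,i)$ — which need not itself lie in $O_k$ — to the admissible ordered triple $(x,t,i)_k$; this is handled by the monotonicity step $l_k(j+i) \leqslant l_k(x+i)$. Everything else is the termwise bound together with the normalization $\sum \lambda_{j,k}^{i,t_a} = 1$ and non-negativity of the coefficients.
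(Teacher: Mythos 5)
Your proof is correct and takes essentially the same route as the paper: the paper's own (one-sentence) justification invokes precisely the underloaded-class inequality $j\,l_k(j) < t\,l_k(j+i)$, the definition of $g^*_k$ in Eq.~\ref{g*k}, and the normalization constraint~\ref{PoA:3}. Your write-up simply makes explicit the monotonicity step $l_k(j+i) \leqslant l_k(x+i)$ and the subset-of-unit-mass argument that the paper leaves implicit.
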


Also from the LHS of lemma~\ref{lemma:mainconstraint} and the definition of $\wg$ in Eq.~\ref{ghatk}, we get

\begin{lemma}
\[
\sum_{R_{j,k}^0 \in \mathcal{E}} \lambda_{j,k}^{0,0} \cdot j l_k(j)
\quad \leqslant \quad
\wg_k
\label{rjk0bound}
\]
\end{lemma}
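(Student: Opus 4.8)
The plan is to read the bound directly off the main equilibrium constraint of Lemma~\ref{lemma:mainconstraint}, in which the quantity $\sum_{R_{j,k}^0 \in \mathcal{E}} \lambda_{j,k}^{0,0}\, j l_k(j)$ already appears (recall $f_{j,k}^{0,0}=jl_k(j)$ by Eq.~\ref{fjk0:def}) as the first summand on the left-hand side. First I would discard the overloaded contribution: since every $\fjkita \in F$ satisfies $\fjkita \geqslant 0$ by Definition~\ref{overloaded}, the sum $\sum_{\fjkita \in F} \lambda_{j,k}^{i,t_a}\fjkita$ is nonnegative, so dropping it from the left side only weakens the inequality and yields
\[
\sum_{R_{j,k}^0 \in \mathcal{E}} \lambda_{j,k}^{0,0}\, jl_k(j)
\;\leqslant\;
\sum_{R_{j,k}^{t_a}\in\mathcal{E}} \sum_{i\in\olta}\sum_{\gjkita\in T} \lambda_{j,k}^{i,t_a}\,\gjkita .
\]

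Next I would bound each underloaded term by $\wg_k$. Recalling $\gjkita = -\fjkita = \frac{l_k(j+i)}{l_k(t)} - \frac{jl_k(j)}{tl_k(t)}$ from Eq.~\ref{fjk:def}, this is precisely the expression maximized in the definition of $\wg_k$ in Eq.~\ref{ghatk} under the substitution $x=j$, $y=t$, $z=i$. The admissibility condition $y\geqslant z$ there translates to $t\geqslant i$, which holds because $i\in\olta$ means a player of weight $i$ occupies a resource whose optimal congestion is $t$, and a single such player already contributes $i$ to $t$. Hence $\gjkita \leqslant \wg_k$ for every term in $T$.

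Finally, factoring out $\wg_k$ (which is nonnegative, since it is a maximum dominating the strictly positive values $\gjkita>0$ that define $T$, the case $T=\emptyset$ being trivial as the right side is then $0$) gives
\[
\sum_{\gjkita\in T}\lambda_{j,k}^{i,t_a}\,\gjkita
\;\leqslant\; \wg_k \!\!\sum_{\gjkita\in T}\!\lambda_{j,k}^{i,t_a}
\;\leqslant\; \wg_k,
\]
where the last inequality uses the normalization $\sum_{R_{j,k}^{t_a}}\sum_{i\in\olta}\lambda_{j,k}^{i,t_a}=1$ from Eq.~\ref{lambda:c2} together with $\lambda_{j,k}^{i,t_a}\geqslant 0$, so that the partial sum restricted to $T$ is at most $1$. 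Chaining the three displays establishes the claim.

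I expect the only delicate point to be the justification that the $\wg_k$-maximization genuinely dominates each $\gjkita$: one must confirm that the side constraint $t\geqslant i$ required by Eq.~\ref{ghatk} is met for every configuration contributing to $T$, and that the sign conventions of Definition~\ref{overloaded} really make the discarded $F$-term nonnegative rather than forcing a case analysis. Both reduce to the elementary fact that a player's weight cannot exceed the optimal congestion of a resource it occupies, so no genuine obstacle arises beyond careful bookkeeping of the indices.
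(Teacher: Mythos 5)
Your proposal is correct and follows exactly the route the paper intends: the paper justifies this lemma only by pointing to the LHS of Lemma~\ref{lemma:mainconstraint} and the definition of $\wg_k$ in Eq.~\ref{ghatk}, and your argument (drop the nonnegative $F$-terms, bound each $\gjkita$ by $\wg_k$ after checking $t \geqslant i$, then apply the normalization of Eq.~\ref{lambda:c2}) is precisely the bookkeeping that claim elides. The side condition $t \geqslant i$ you worried about is in fact already granted by the paper, which restricts to $0 < i \leqslant \min(t,w)$ when defining $\fjkita$.
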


{\it Proof of lemma~\ref{POAUB}}:
Applying lemmas~\ref{ginTbound} and ~\ref{rjk0bound} to the objective function $H(S)$ in ~\ref{PoA:1} we have

\beqs
H(S) \quad \leqslant \quad 
\wg_k + g^*_k +
\sum_{R_{j,k}^{t_a} \in \mathcal{E}} \sum_{i \in \olta} 
\sum_{ (j,t,i) \geqslant (x,t,i)_k }
\lambda_{j,k}^{i,t_a} \cdot  \frac{j l_k(j)}{t l_k(t)}
\eeqs

For every ordered triple $(x,t,i)_k \in O_k$ and
any $(j,t,i) \geqslant (x,t,i)_k$ define 
$\hljkita = \widehat{g}_k / (\widehat{g}_k + \fjkita)$ and
$\hWjkita = \hljkita \frac{j l_k(j)}{t l_k(t)}$. Letting
$W^* = \max_{(j,t,i) \geqslant (x,t,i)_k} \hWjkita$ we can rewrite the expression above as,


\beqs
H(S) \quad \leqslant \quad 
\widehat{g}_k + g^*_k + W^* 
\sum_{R_{j,k}^{t_a} \in \mathcal{E}} \sum_{i \in \olta} 
\sum_{ (j,t,i) \geqslant (x,t,i)_k }
\ljkita \left( 1 + \frac{\fjkita}{\widehat{g}_k} \right)
\eeqs

Using the RHS of lemma~\ref{lemma:mainconstraint} to bound the last term above, we get

\bea
H(S) &\leqslant &
\widehat{g}_k + g^*_k + W^* 
\left(
\sum_{\substack{R_{j,k}^{t_a} \in \mathcal{E} \\ i \in \olta}} 
\sum_{ (j,t,i) \geqslant (x,t,i)_k }
\ljkita 
+ 
\frac{1}{\widehat{g}_k}
\sum_{\substack{R_{j,k}^{t_a} \in \mathcal{E} \\ i \in \olta}} 
\sum_{ (j,t,i) < (x,t,i)_k }
\ljkita \gjkita 
\right)
\notag \\
&\leqslant &\widehat{g}_k + g^*_k +  W^*
\sum_{R_{j,k}^{t_a} \in \mathcal{E}}  \sum_{i \in \olta}
\left(
\sum_{ (j,t,i) \geqslant (x,t,i)_k }
\ljkita 
+
\sum_{ (j,t,i) < (x,t,i)_k }
\ljkita
\right)
\notag \\
&\leqslant &\widehat{g}_k + g^*_k + W^* 
\notag \\
&= &\widehat{g}_k + g^*_k + 
\max_{\substack{(j,t,i) \geqslant (x,t,i)_k \\
\forall (x,t,i)_k \in O_k}} \ \ 
\frac{\wg_k j l_k(j)}
{\wg_k t l_k(t) + jl_k(j) - t l_k(j+i)}
\label{w*} 
\eea

Note that if ordered triples exist for the cost function $l_k()$, 
then since $j^* l_k(j^*) = t^* l_k(j^* + i^*)$ and $\frac{j^*l_k(j^*)}{t^*l_k(t^*)} = g^*_k$ and $(j,t,i) \geqslant (j^*,t^*,i^*)_k$ we must have $\max_{(j,t,i)\geqslant (x,t,i)_k, \forall (x,t,i)_k \in O_k} W^*\geqslant g^*_k \geqslant \wg_k$ and so the last term dominates. However for rapidly growing cost functions, ordered triples may not exist and  so only the first two terms above are relevant. This leads to the expression in the lemma as desired.
\hfill $\Box$

{\it Proof of lemma~\ref{l2bound}}:
Let $z$ denote the $PoA$ expression 
in Eq.~\ref{PoAbound:main}.
Taking the partial derivative $\frac{\partial y}{\partial j}$  and equating to $0$ gives us

\beq
\left. \frac{\partial y}{\partial j} \right|_0 \implies
\wg_k l_k(t) =  l_k(j+i)  -j l_k(j) \frac{l'(j+i)}{(jl_k(j))'}
\label{dydj2}
\eeq

Similarly evaluating the partial with respect to $t$ gives
\beq
\frac{\partial y}{\partial t} = \alpha \left( l_k(j+i) - \wg_k (tl_k(t))' \right) 
\label{dydt2}
\eeq
where $\alpha >0$. Evaluating the two together it can be shown that $\frac{\partial y}{\partial t}$ is decreasing in $t$ and $t=i$ at the maximum value of $z$. Given $0 <i \leq w$, the bound on the $PoA$ can then be obtained by solving the following minimization using standard KKT conditions: 

\begin{flalign}
&\min \left(
\wg_k / \left( 1 - \frac{t l'_k(j+i)}{(j l_k(j))'} \right)
\right)
\label{POAminKKT} \\
& \mbox{ s.t } \notag \\
& \wg_k  l(t) = l_k(j+i) - j l_k(j) \frac{l'(j+i)}{(jl_k(j))'} 
\\
&j l_k(j) \geq t l_k(j+i) \\
&0 <i \leq w
\end{flalign}

\hfill $\Box$

\end{document}